\newtheorem{theorem}{Theorem}
\newtheorem{corollary}{Corollary}
\newtheorem{lemma}{Lemma}
\newtheorem{definition}{Definition}
\newcommand{\bbA}{\mathbb{A}}
\newcommand{\bbE}{\mathbb{E}}
\newcommand{\bbS}{\mathbb{S}}
\title{
On the Complexity of Computing Markov Perfect Equilibrium in General-Sum Stochastic Games
}
\author{
	 Xiaotie Deng\thanks{Corresponding authors.} \\
     Center on Frontiers of Computing Studies\\
	 Peking University\\
	 \texttt{xiaotie@pku.edu.cn} \\
	 \And
	Ningyuan Li$^*$\\
	Center on Frontiers of Computing Studies\\
	Peking University\\
	\texttt{yuhaoli.cs@pku.edu.cn} \\
	 \And
	David Mguni\\
	Huawei R\&D UK\\
	\texttt{davidmguni@hotmail.com} \\
	 \And
	Jun Wang\\
	University College London\\
	\texttt{jun.wang@cs.ucl.ac.uk} \\
	 \And
	Yaodong Yang$^*$\\
	Institute for AI, Peking University\\
	\texttt{yaodong.yang@pku.edu.cn} \\
}
\begin{document}

\maketitle

\begin{abstract}
Similar to the role of Markov decision processes in reinforcement learning, Stochastic Games (SGs) lay the foundation for the study of multi-agent reinforcement learning (MARL) and sequential agent interactions. In this paper, we derive that computing an approximate Markov Perfect Equilibrium (MPE) in a finite-state discounted Stochastic Game within the exponential precision is \textbf{PPAD}-complete. We adopt a function with a polynomially bounded description in the strategy space to convert the MPE computation to a fixed-point problem, even though the stochastic game may demand an exponential number of pure strategies, in the number of states, for each agent. The completeness result follows the reduction of the fixed-point problem to {\sc End of the Line}. Our results indicate that finding an MPE in SGs is highly unlikely to be \textbf{NP}-hard unless \textbf{NP}=\textbf{co-NP}. Our work offers confidence for MARL research to study MPE computation on general-sum SGs and to develop fruitful algorithms as currently on zero-sum SGs. 

\end{abstract}

\section{Introduction}
The seminal work of Shapley~\cite{shapley1953stochastic} 
defines Stochastic Games (SGs) to study the dynamic non-cooperative multi-player game, where each player simultaneously and independently chooses an action at each round, and the next state is determined by a probability distribution depending on the current state and the chosen joint actions. 
In two-player zero-sum SGs, Shapley \cite{shapley1953stochastic}  proved the existence of a stationary strategy profile  in which no agent has an incentive to deviate; similarly,  the existence of equilibrium in stationary  strategies also holds in multi-player nonzero-sum SGs \cite{fink1964equilibrium}. 
Such a solution concept (now also known as \textsl{Markov perfect equilibrium} (MPE) \cite{maskin2001markov}) models the dynamic nature of  multi-player games. 
As a refinement of Nash equilibrium \cite{nash1951} on SGs,  MPE  prevents non-payoff-relevant variables from affecting strategic behaviors, which allows researchers to identify the impact of state variables on outcomes.
Due to its generality, 
the framework of SGs has  enlightened  a sequence of  studies \cite{neyman2003stochastic,solan2015stochastic}  on a wide range of real-world applications   ranging from advertising and pricing \cite{albright1979birth}, fisheries modelling  \cite{sobel1982stochastic}, football player selection \cite{winston1984stochastic}, travelling inspection \cite{filar1985player}, and designing modern gaming AIs \cite{peng2017multiagent}. 
As a result, developing algorithms to compute MPE in  SGs has become one of the key subjects in an extremely rich research domain, including but not limited to  applied mathematics, economics, operations research, computer science and artificial intelligence   \cite{filar2012competitive,raghavan1991algorithms,solan2015stochastic}.

SGs underpin many AI/machine learning studies. For example, it is the key framework for studying adversarial training \cite{franci2020game, goodfellow2014generative} and modelling robustness \cite{pinto2017robust,abdullah2019wasserstein} in zero-sum setting.  
In reinforcement learning (RL), 
SG extends the Markov decision process (MDP) formulation to incorporate strategic interactions. 
Similar to the role of MDP in RL \cite{sutton2018reinforcement},    
SGs build the foundation for  multi-agent reinforcement learning (MARL) techniques  to study optimal decision makings in multi-player games \cite{littman1994markov}. 
In  last decades, a wide variety of MARL algorithms have been developed to solved SGs \cite{yang2020overview}. 

Computing a MPE in (general-sum) SGs requires a perfect knowledge of the transition dynamics and the payoffs of the game \cite{filar2012competitive}, which is often infeasible in practice. 
To overcome this difficulty, MARL methods are often applied to learn the MPE of a SG based on the interactions between agents and the environment. 
MARL algorithms are generally considered under two settings: \textsl{online} and \textsl{offline}.  
In the offline setting (also known as the batch setting \cite{perolat2017learning}), the learning algorithm controls all players in a centralised way, with the hope that the learning dynamics can eventually lead to a MPE by using limited number of interaction samples. 
In the online setting, the learner controls only one of the players to play with arbitrary opponents in the game, assuming having unlimited access to the game environment,  and  the central focus is often about the \textsl{regret}: the difference between a benchmark measure (often in hindsight) and the learner's total reward during learning.  

In the offline setting, two-player zero-sum (discounted) SGs have been extensively studied. 
Since the opponent is purely adversarial in zero-sum SGs, the process of seeking for the worst-case optimality for each player can be thought of as solving MDPs. 
As a result, (approximate) dynamic programming methods \cite{bertsekas2008approximate,szepesvari1996generalized} such as LSPI \cite{lagoudakis2003least}  and FQI \cite{munos2008finite} / NFQI \cite{riedmiller2005neural} can be adopted to solve SGs \cite{perolat2016softened,lagoudakis2002value,perolat2015approximate,sidford2020solving,jia2019feature}. 
Under this setting, policy-based methods ~\cite{daskalakis2020independent, hansen2013strategy} can also be applied. 
However, directly applying exiting MDP solvers on  general-sum SGs are challenging. 
Since solving two-player NE in general-sum normal-form games (i.e., one-shot SGs) is well-known to be PPAD-complete \cite{daskalakis2009complexity,chen2006settling}, the complexity of MPE in general-sum SGs are expected to be at least PPAD. 
Although early attempts such as Nash-Q learning \cite{hu2003nash}, Correlated-Q learning \cite{greenwald2003correlated}, Friend-or-Foe Q-Learning \cite{littman2001friend} have been made to solve general-sum SGs under strong assumptions,  
Zinkevich et. al. \cite{zinkevich2006cyclic} demonstrated that the entire class of value iteration methods cannot find stationary NE policies in general-sum SGs.
The difficulties on both the complexity side and the algorithmic side lead to very few existing MARL solutions to  general-sum SGs; successful approaches either assumes knowing the complete information of the SG  and thus solving MPE can be turned into an optimisation problem \cite{prasad2015two}, or,   proves the convergence of batch RL methods to a weaker notion of NE  \cite{perolat2017learning}.

In the online setting, one of the most well-known algorithm is R-MAX \cite{brafman2002r}, which studied (average-reward) zero-sum SGs and provided a polynomial (in terms of game size and error parameter) regret bound when competing with an arbitrary opponent.   
Under the same regret definition, recently, UCSG \cite{wei2017online} improved R-MAX and achieved a sublinear regret, but still in two-player zero-sum SGs. 
When it comes to MARL solutions, Littman \cite{littman1994markov} proposed a practical solution named Minimax-Q that replaces the max operator with the minimax value. Asymptotic convergence results of Minimax-Q in both tabular cases \cite{littman1996generalized} and value function approximations \cite{fan2020theoretical} have been shown. 
Yet, playing the minimax value could be overly pessimistic. If the adversary plays sub-optimally, the learner could achieve a higher reward.
To account for this, WoLF  \cite{bowling2001rational} was proposed; and unlike Minimax-Q, WoLF is \emph{rationale} in the sense that it can exploit opponent's policy.     
AWESOME \cite{conitzer2007awesome}
 further generalised WoLF and achieve NE convergence in  multi-player general-sum repeated games. 
However, 
outside the scope of zero-sum SGs, 
the question \cite{brafman2002r} of  whether a polynomial time no-regret (near-optimal) RL/MARL  algorithm exists for general-sum SGs is still unanswered. 

Although SG has been proposed for more than 60 years and despite its importance, surprisingly, the complexity of finding a  MPE in  SG has never been answered. 
In fact, unlike the fruitful results on zero-sum SGs,    we still know very little about the complexity of solving general-sum SGs. 
Two relevant results we know   are   
that determining whether a pure-strategy NE exist in a  SG is \textbf{PSPACE}-hard \cite{conitzer2008new}, and it is \textbf{NP}-hard to determine if there exists a memoryless $\epsilon$-NE in \emph{reachability} SGs  \cite{chatterjee2004nash}. 
It is long projected solving MPE in (infinite-horizon)  SGs is at least \textbf{PPAD}-hard, since solving a two-player NE in one-shot SGs is already \textbf{PPAD}-hard  \cite{daskalakis2009complexity,chen2006settling}. 
This suggests that under computational hardness assumption, it is unlike to have polynomial-time algorithms in even two-player stochastic games. 
Yet, the unresolved question is that 

\begin{tcolorbox}[fonttitle=\normalsize,fontupper=\normalsize,fontlower=\normalsize,top=1pt,bottom=1pt,left=1pt,right=1pt,title=The key question that we try to address in this paper:]
\centering
\emph{Can solving MPE in general-sum SGs be anywhere harder in the complexity class?}
\end{tcolorbox}

In this paper, we answer to the above question negatively by proving that computing a MPE in a finite-state discounted SG is \textbf{PPAD}-complete. 
Based on our result, we given an affirmative answer that   finding an MPE in SGs is highly unlikely to be \textbf{NP}-hard under the circumstance that \textbf{NP}$\neq$ \textbf{co-NP}.  
We hope this result could encourage MARL researchers to work more on general-sum SGs, leading to  fruitful MARL solutions  as those currently on zero-sum SGs. 

\subsection{Intuitions and a Sketch of Our Main Ideas}
Like the classic complexity class \textbf{NP}, \textbf{PPAD} is a collection of computational problems. As the definition of \textbf{NP}-completeness, a problem is said to be \textbf{PPAD}-complete if it is in \textbf{PPAD}, and is at least as hard as every problem in \textbf{PPAD}. When one Stochastic Game has only one state and the discount factor $\gamma=0$, then finding a Markov perfect equilibrium (MPE) is equivalent to finding a Nash equilibrium in the corresponding normal-form game, which is known to be \textbf{PPAD}-complete \cite{daskalakis2009complexity,chen2006settling}. So the \textbf{PPAD}-hardness of finding MPE is relatively direct (\Cref{PPAD-hard}).

To obtain the \textbf{PPAD}-complete result (\Cref{PPAD-complete}), it is sufficient for us to prove the \textbf{PPAD} membership of MPE (\Cref{PPAD membership}). 

\textbf{i)} The first key observation is that we can construct a function $f$ of the strategy profile space, such that each strategy profile is a fixed point of $f$ if and only of it is an MPE (\Cref{theorem: MPE exists}). Further, we prove the function $f$ is continuous (actually $\lambda$-Lipschitz by \Cref{lemma: f Lipschitz}), so that fixed points are guaranteed to exist by the Brouwer fixed point theorem.

\textbf{ii)} We then prove the function $f$ has some ``good'' approximation properties. Let $|\mathcal{SG}|$ be the input size of a stochastic game. If we can find a $\texttt{poly}(|\mathcal{SG}|)\epsilon^2$-approximate fixed point $\pi$ of $f$, i.e., $\|f(\pi)-\pi\|_{\infty}\leq \texttt{poly}(|\mathcal{SG}|)\epsilon^2$, where $\pi$ is a strategy profile, then $\pi$ is an $\epsilon$-approximate MPE for the Stochastic Game (combining \Cref{lemma: single state} and \Cref{lemma: all states}). So our goal converts to finding an approximate fixed point.

\textbf{iii)} To prove the \textbf{PPAD} membership of finding an MPE, we will reduce it to the problem {\sc End of the Line} (whose formal definition is in \Cref{sec: PPAD and MPE problem}), which is the first \textbf{PPAD}-complete problem introduced by Papadimitriou~\cite{papadimitriou1994complexity}. We will show, the reduction could be constructed in polynomial time, and every solution of the problem {\sc End of the Line} corresponds to a good approximate fixed point (\Cref{lemma: stopping appro}), thus yields an $\epsilon$-approximate MPE.

\section{Stochastic Games}

\begin{definition}[Stochastic Game]\label{definition: Stochastic Game}
A Stochastic Game is defined by a tuple of key elements $\left\langle n,\bbS,\bbA,P,r,\gamma\right\rangle$, where 
\begin{itemize}
    \item $n$ is the number of agents.
    \item $\bbS$ is the set of finite environmental states. Suppose that $|\bbS|=S$.
    \item $\bbA=\bbA^1\times\cdots\times \bbA^n$ is the set of agents' joint actions. Suppose that $|\bbA^i|=A^i$ and $A_{\max}=\max_{i\in[n]} A^i$.
  \item $P: \bbS \times \bbA \rightarrow \Delta(\bbS)$ is the transition probability, that is, at each time step, given the agents' joint action $a\in \bbA$, then the transition probability from state $s$ to state in the next time step $s'$ is $P(s'|s,a)$.
    \item $r=r^1\times\cdots\times r^n: \bbS\times \bbA \rightarrow \mathcal{R}_+^n$ is the reward function, that is, when an agents are at state $s$ and play a joint action $a$, then the agent $i$ will get reward $r^i(s,a)$. We assume that the rewards are uniformly bounded by $R_{\max}$.
    \item $\gamma\in \left[0,1\right)$ is the discount factor that specifies the degree to which the agent’s rewards are discounted over time.
\end{itemize}
\end{definition}

Each agent aims to find a behavioral strategy with Markovian property, meaning that each agent's strategy can be conditioned only on the current state of the game. 

Note that behavioral strategy is different from mixed strategy. To be more clear, we give both definitions of mixed strategy and behavioral strategy.

The pure strategy space of an agent $i$ is $\prod_{s\in\bbS}\bbA^i$, meaning that the agent $i$ needs to select an action at each state. Note that the size of pure strategy space of each agent is $|\bbA^i|^S$, which is exponential in the number of states.
\begin{definition}[Mixed Strategy]
The mixed strategy space is $\Delta\left(\prod_{s\in\bbS}\bbA^i\right)$, i.e., the probability distribution on pure strategy space $\prod_{s\in\bbS}\bbA^i$.
\end{definition}

\begin{definition}[Behavioral Strategy]
A behavioral strategy of an agent $i$ is $\pi^i: \bbS\rightarrow \Delta(\bbA^i)$, i.e., $\forall s\in\bbS, \pi^i(s)$ is a probability distribution on $\bbA^i$.
\end{definition}

In the rest of the paper, we will refer to a behavioral strategy simply as a strategy for convenience. A strategy profile $\pi$ is the Cartesian product of all agents' strategy, i.e., $\pi=\pi^1\times\cdots\times\pi^n$. 

We denote the probability of agents using the joint action $a$ on state $s$ by $\pi(s,a)$, the probability of agent $i$ using the action $a^i$ on state $s$ by $\pi^i(s,a^i)$. The strategy profile other than agent $i$ is denoted by $\pi^{-i}$. Given $\pi$, the transition probability and the reward function only depend on the current state $s\in \bbS$. So let $r^{i,\pi}(s)$ denote $\bbE_{a\sim \pi(s)}[r^i(s,a)]$ and let $P^{\pi}(s'|s)$ denote $\bbE_{a\sim \pi(s)}[P(s'|s,a)]$. Given $\pi^{-i}$, the transition probability and the reward function only depend on the current state $s\in \bbS$ and player $i$'s action $a^i$. So let $r^{i,\pi^{-i}}(s,a^i)$ denote $\bbE_{a^{-i}\sim \pi^{-i}(s)}[r^i(s,(a^i,a^{-i}))]$ and let $P^{\pi^{-i}}(s'|s,a)$ denote $\bbE_{a^{-i}\sim \pi^{-i}(s)}[P(s'|s,(a^i,a^{-i}))]$.

For any positive integer $m$, let $\Delta_m:=\{x\in \mathcal{R}_+^m|\sum_{i=1}^m x_i=1\}$. Define $\Delta_{A^i}^k:=\times_{p=1}^k\Delta_{A^i}$. Then $\forall s\in\bbS,\pi^i(s)\in \Delta_{A^i}$, $\pi^i\in \Delta_{A^i}^S$ and $\pi\in \prod_{i=1}^n\Delta_{A^i}^S$.

\begin{definition}[Value Function]\label{definition: value function}
A value function for a strategy profile $\pi$ of an agent $i$, written $V^{\pi^i,\pi^{-i}}:\bbS\rightarrow R$ gives the expected sum of discounted rewards of the agent $i$ when the starting state is $s$: 
$$V^{\pi^i,\pi^{-i}}(s)=\bbE \left[\sum_{t=0}^{\infty} \gamma^t r^i(s_t,a)\Big|s_0=s,a\sim\pi(s_t), s_{t+1}\sim P^{\pi}(s_t) \right].$$

Alternatively, the value function can also be defined recursively via the Bellman equation.
$$V^{\pi^i,\pi^{-i}}(s)=\sum_{s'\in \bbS}\mathop{\bbE}_{a\sim\pi(s)}\left[ r^{i}(s,a)\right]+\gamma P^{\pi}(s'|s)V^{\pi^i,\pi^{-i}}(s').$$
\end{definition}

\begin{definition}[Markov Perfect Equilibrium (MPE)]\label{definition: MPE}
A behavioral strategy profile $\pi$ is called a Markov Perfect Equilibrium if $$\forall s\in \bbS, i\in[n], \forall \tilde{\pi}^i\in \Delta_{A^i}^S, V^{\pi^i,\pi^{-i}}(s)\geq V^{\tilde{\pi}^i,\pi^{-i}}(s).$$
\end{definition}

\begin{definition}[$\epsilon$-approximate MPE]
Given $\epsilon>0$, a behavioral strategy profile $\pi$ is called an $\epsilon$-approximate MPE if $$\forall s\in \bbS, i\in[n], \forall \tilde{\pi}^i\in \Delta_{A^i}^S, V^{\pi^i,\pi^{-i}}(s)\geq V^{\tilde{\pi}^i,\pi^{-i}}(s)-\epsilon.$$
\end{definition}

The Markov perfect equilibrium is a concept within SGs in which the players’ strategies depend only on the current state and not the game history. So the state encodes all relevant information for the player's strategies.

\section{The Class \textbf{PPAD} and {\sc Markov-Perfect Equilibrium} Problem}
\label{sec: PPAD and MPE problem}

The complexity class \textbf{PPAD} is introduced~\cite{papadimitriou1994complexity} to characterize the
mathematical proof structure required in a class of mathematical problems based on a parity argument for a solution to exist as in the following problem of {\sc End of the Line}.
It has included Nash equilibrium computation~\cite{daskalakis2009complexity,chen2006settling}, as well as many other problems.

The problem is defined on a class of directed graphs consisting of an exponential number of vertices (numbered from $0^n$ to $2^n-1$). Edges of this graph is defined by two polynomial-size circuits $S$ and $P$, each with $n$ input bits and $n$ output bits. There is an edge from vertex $u$ to vertex $v$ if and only if $S(u)=v$ and $P(v)=u$. Note that each vertex has at most 1 indegree and at most 1 outdegree, which means that the graph only consists of paths, cycles, and isolated vertices.
\begin{definition}[$(S,P)$-Graph~\cite{goldberg2013complexity}]
An $(S,P)$-graph with parameter $n$ is a graph on $\{0,1\}^n$ specified by circuits $S$ and $P$, as described above, subject to the constraint that vertex $0^n$ has no incoming edge but does have an outgoing edge.
\end{definition}

Based on $(S,P)$-graphs, the problem {\sc End of the Line} is to find a vertex other that $0^n$ such that
the sum of its indegree and outdegree is one but {\sc Other End of this Line} is to find the end of the particular path that starts at $0^n$~\cite{goldberg2013complexity}.
It turns out that the two problems are dramatically different in terms of their computational complexity. The former is \textbf{PPAD}-complete~\cite{papadimitriou1994complexity} but the latter is PSPACE-complete~\cite{goldberg2013complexity}.

Here we give the definition of computational problem of finding a Markov Perfect Equilibrium in Stochastic Games.

\begin{definition}[{\sc Markov-Perfect Equilibrium}]
The input instance of problem {\sc Markov-Perfect Equilibrium} is a pair $(\mathcal{SG},L)$ where $\mathcal{SG}$ is a Stochastic Game and $L$ is a binary integer. The output of problem {\sc Markov-Perfect Equilibrium} is a strategy profile $\pi\in\prod_{i=1}^n\Delta_{A^i}^S$ such that $\pi$ is a $1/L$-approximate MPE.
\end{definition}

\begin{theorem}[Main Theorem]\label{PPAD-complete}
{\sc Markov-Perfect Equilibrium} is \textbf{PPAD}-complete.
\end{theorem}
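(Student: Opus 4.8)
The plan is to establish \textbf{PPAD}-completeness by proving the two inclusions separately, following the roadmap the authors lay out in Section 1.1. For hardness, I would invoke the degenerate reduction already noted in the excerpt: take a stochastic game with a single state $S=1$ and discount factor $\gamma=0$, so that the value function $V^{\pi^i,\pi^{-i}}(s)=r^{i,\pi}(s)=\bbE_{a\sim\pi(s)}[r^i(s,a)]$ coincides exactly with the expected payoff of a one-shot normal-form game. Under this identification a behavioral strategy profile is just a mixed profile of the normal-form game, and the MPE condition in \Cref{definition: MPE} becomes precisely the Nash condition; an $\epsilon$-approximate MPE is an $\epsilon$-approximate Nash equilibrium. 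Since computing an approximate Nash equilibrium of a two-player general-sum game is \textbf{PPAD}-hard by \cite{daskalakis2009complexity,chen2006settling}, and the reduction is trivially polynomial-time (it is essentially the identity map on the payoff matrices), this gives \textbf{PPAD}-hardness of {\sc Markov-Perfect Equilibrium}. This is the content of the earlier \Cref{PPAD-hard}.

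The substance lies in membership. I would assemble it from the ingredients already promised in the excerpt. First, by \Cref{theorem: MPE exists} there is a map $f$ on the strategy-profile space $\prod_{i=1}^n\Delta_{A^i}^S$ whose fixed points are exactly the MPE, and by \Cref{lemma: f Lipschitz} this $f$ is $\lambda$-Lipschitz, hence continuous on a convex compact domain, so Brouwer guarantees existence. The key quantitative step is the approximation bridge: combining \Cref{lemma: single state} and \Cref{lemma: all states} shows that any $\texttt{poly}(|\mathcal{SG}|)\epsilon^2$-approximate fixed point $\pi$, meaning $\|f(\pi)-\pi\|_\infty\le\texttt{poly}(|\mathcal{SG}|)\epsilon^2$, is already an $\epsilon$-approximate MPE. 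The crucial structural point — and what makes the whole approach work despite each agent having $|\bbA^i|^S$ pure strategies — is that the domain $\prod_{i=1}^n\Delta_{A^i}^S$ has dimension only $\sum_i S\cdot A^i$, polynomial in $|\mathcal{SG}|$, and that evaluating $f$ at a point requires only solving the linear Bellman systems for the value functions, which is polynomial-time. Thus $f$ is a polynomially-computable, polynomially-Lipschitz self-map of a polynomial-dimensional simplex product, and finding its approximate fixed point is a \textbf{Brouwer} instance of polynomial size.

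To convert this into \textbf{PPAD} membership I would reduce the approximate-fixed-point problem for $f$ to {\sc End of the Line}, which is the canonical route: discretize the domain into a simplicial grid fine enough that the $\lambda$-Lipschitz property of $f$ controls the variation of $f$ across each simplex, set the mesh so that a fully-labeled simplex in the associated Sperner coloring yields a point $\pi$ with $\|f(\pi)-\pi\|_\infty\le\texttt{poly}(|\mathcal{SG}|)\epsilon^2$, and encode the standard path-following (Sperner/Scarf) structure as the successor and predecessor circuits $S,P$ of an $(S,P)$-graph. By \Cref{lemma: stopping appro}, every solution of the resulting {\sc End of the Line} instance corresponds to a sufficiently good approximate fixed point, hence to an $\epsilon$-approximate MPE. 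Because $\epsilon=1/L$ and the grid needs resolution inverse-polynomial in $L$ and in $|\mathcal{SG}|$, the circuits have polynomial size, so the reduction runs in polynomial time.

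The main obstacle I anticipate is the quantitative approximation analysis rather than the combinatorial fixed-point machinery. Specifically, one must show that a near-fixed-point of $f$ in the $\ell_\infty$ strategy metric translates into a uniformly small regret bound $V^{\pi^i,\pi^{-i}}(s)\ge V^{\tilde\pi^i,\pi^{-i}}(s)-\epsilon$ simultaneously over \emph{all} states $s$, all agents $i$, and all deviations $\tilde\pi^i\in\Delta_{A^i}^S$, despite the deviation space being exponentially large. The delicate point is that a per-state, per-action near-optimality in $f$ must be propagated through the discounted Bellman recursion; controlling how approximation errors compound across the infinite horizon — and why the contraction factor $\gamma$ together with the $\epsilon^2$-to-$\epsilon$ slack absorbs this blow-up — is exactly where \Cref{lemma: single state} and \Cref{lemma: all states} do the heavy lifting, and where a naive bound would lose the polynomial dependence. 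I would therefore treat that error-propagation estimate as the technical heart of the membership proof, with the reduction to {\sc End of the Line} being standard once the Lipschitz and approximation guarantees are in hand.
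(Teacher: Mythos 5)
Your proposal matches the paper's own proof essentially step for step: hardness via the one-state, $\gamma=0$ degeneration to normal-form Nash, and membership via the Lipschitz map $f$, the approximate-fixed-point-to-approximate-MPE bridge (\Cref{lemma: single state} and \Cref{lemma: all states}), and a simplicial path-following reduction to {\sc End of the Line} (the paper instantiates this with the Laan--Talman algorithm, which is exactly the Sperner/Scarf-style machinery you describe). You also correctly identify the error-propagation analysis through the Bellman recursion as the technical heart, so the proposal is sound and equivalent to the paper's argument.
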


We note that when $|S|=1$ and $\gamma=0$, a Stochastic Game degenerates to an $n$-player matrix game. At this time, any Markov Perfect Equilibrium of this Stochastic Game is a Nash Equilibrium for the corresponding matrix game. So we have the following hardness result immediately:

\begin{lemma}\label{PPAD-hard}
{\sc Markov-Perfect Equilibrium} is \textbf{PPAD}-hard.
\end{lemma}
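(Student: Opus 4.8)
The plan is to exploit the degenerate construction already highlighted before the statement: \textbf{PPAD}-hardness of {\sc Markov-Perfect Equilibrium} will follow from a polynomial-time reduction from the computation of an (approximate) Nash equilibrium in a general-sum normal-form game, which is \textbf{PPAD}-complete even for two players~\cite{daskalakis2009complexity,chen2006settling}. The guiding observation is that a Stochastic Game with a single state and zero discount factor is nothing but a one-shot matrix game, so its MPEs coincide exactly with the Nash equilibria of that game. I would reduce from the two-player version, which already yields hardness for the smallest nontrivial number of agents.

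First I would set up the reduction. Given an $n$-player (in particular, two-player) normal-form game with action sets $\bbA^1,\dots,\bbA^n$ and payoff functions $u^i:\bbA\to\bbR$, together with a precision parameter $L$, I construct an instance $(\mathcal{SG},L)$ of {\sc Markov-Perfect Equilibrium} by setting $S=|\bbS|=1$ with single state $s$, keeping the same action sets, defining $r^i(s,a)=u^i(a)$ for every joint action $a\in\bbA$, fixing the only possible transition $P(s\mid s,a)=1$, and taking $\gamma=0$. This copies the payoff tables verbatim and adds only constant-size bookkeeping, so it is computable in time polynomial in the description length of the game and of $L$; the precision parameter $L$ is passed through unchanged.

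Next I would verify correctness. With $\gamma=0$, the Bellman recursion of \Cref{definition: value function} collapses to
$$V^{\pi^i,\pi^{-i}}(s)=\mathop{\bbE}_{a\sim\pi(s)}\left[r^i(s,a)\right]=\mathop{\bbE}_{a\sim\pi(s)}\left[u^i(a)\right],$$
so the value of agent $i$ is exactly its expected one-shot payoff in the matrix game. Because $|\bbS|=1$, a behavioral strategy $\pi^i:\bbS\to\Delta(\bbA^i)$ is just a single mixed strategy $\pi^i(s)\in\Delta_{A^i}=\Delta(\bbA^i)$, and a deviation $\tilde{\pi}^i\in\Delta_{A^i}^S$ is an arbitrary mixed strategy. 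Hence the $1/L$-approximate MPE condition of \Cref{definition: MPE} reads, for every agent $i$ and every mixed strategy $\tilde{\pi}^i$,
$$\mathop{\bbE}_{a\sim\pi(s)}\left[u^i(a)\right]\geq \mathop{\bbE}_{a\sim(\tilde{\pi}^i,\pi^{-i})(s)}\left[u^i(a)\right]-\tfrac{1}{L},$$
which is precisely the definition of a $1/L$-approximate Nash equilibrium of the normal-form game. Thus any solution returned for the constructed SG instance is, after reading off the single-state distributions, a solution to the original Nash instance.

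The argument is essentially mechanical, so there is no genuine obstacle; the only points requiring care are (i) confirming that the Nash-equilibrium problem invoked is \textbf{PPAD}-complete at the relevant inverse-exponential precision, so that the parameter $L$ given in binary is handled faithfully and the hardness is not an artifact of a coarse approximation regime, and (ii) checking that collapsing the behavioral-strategy space to mixed strategies when $S=1$ loses nothing, i.e. that the MPE deviation quantifier $\forall\tilde{\pi}^i\in\Delta_{A^i}^S$ ranges over exactly the mixed strategies of the matrix game. Both are immediate from the definitions above, and together they establish that {\sc Markov-Perfect Equilibrium} is \textbf{PPAD}-hard.
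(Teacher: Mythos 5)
Your proof is correct and follows exactly the same route as the paper: the paper establishes \Cref{PPAD-hard} precisely by observing that a Stochastic Game with $|\bbS|=1$ and $\gamma=0$ degenerates to an $n$-player normal-form game whose Nash equilibria coincide with the MPEs, then invoking the known \textbf{PPAD}-hardness of Nash equilibrium~\cite{daskalakis2009complexity,chen2006settling}. You merely spell out the verification (collapse of the Bellman equation, identification of behavioral with mixed strategies, and the handling of the precision parameter $L$) that the paper leaves as immediate.
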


In the rest of the paper, we will mainly focus on the proof of \textbf{PPAD} membership of MPE.

\begin{lemma}\label{PPAD membership}
{\sc Markov-Perfect Equilibrium} is in \textbf{PPAD}.
\end{lemma}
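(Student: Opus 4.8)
The plan is to establish membership by reducing the search for a suitably accurate approximate fixed point of $f$ to {\sc End of the Line}, following the classical path-following argument behind Sperner's lemma and Brouwer's theorem. Two facts make this feasible despite the exponential pure-strategy space. First, the domain $D:=\prod_{i=1}^n\Delta_{A^i}^S$ is a product of simplices of total dimension $d=\sum_{i=1}^n S(A^i-1)\le nSA_{\max}$, which is polynomial in $|\mathcal{SG}|$; so a strategy profile is described by polynomially many coordinates. Second, $f$ is computable in time $\texttt{poly}(|\mathcal{SG}|)$: for a fixed profile $\pi$ the value functions $V^{\pi^i,\pi^{-i}}$ solve the $S\times S$ linear Bellman system in \Cref{definition: value function}, hence are obtained in polynomial time, and $f$ is assembled from these values by elementary operations. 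Combined with \Cref{lemma: f Lipschitz} (so $f$ is $\lambda$-Lipschitz) and the approximation guarantee from \Cref{lemma: single state} and \Cref{lemma: all states}, it suffices to exhibit, for the target accuracy $\epsilon=1/L$, a point $\pi\in D$ with $\|f(\pi)-\pi\|_{\infty}\le \texttt{poly}(|\mathcal{SG}|)\epsilon^2$; such a $\pi$ is automatically a $1/L$-approximate MPE.

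First I would discretize $D$. I impose a Kuhn-type simplicial subdivision on each simplex factor $\Delta_{A^i}$ with a common mesh spacing $\delta$, inducing a triangulation of $D$ into cells of diameter $O(\delta)$; choosing $\delta=\Theta\!\big(\texttt{poly}(|\mathcal{SG}|)\epsilon^2/\lambda\big)$ ensures, by the Lipschitz bound, that $f$ varies by at most the target tolerance across any single cell. I then assign to each grid vertex $x$ a Sperner label determined by the displacement $f(x)-x$, namely the index of a coordinate along which $x$ is pushed toward the interior, arranged so that the boundary of $D$ receives labels forbidding the missing color. With this labeling a \emph{panchromatic} cell---one carrying all $d+1$ labels---pins the displacement in every coordinate and therefore contains a point $\pi$ with $\|f(\pi)-\pi\|_{\infty}\le\texttt{poly}(|\mathcal{SG}|)\epsilon^2$, which is exactly the approximate fixed point we need (the content of \Cref{lemma: stopping appro}).

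Next I would encode the standard pivoting (door-following) walk on these cells as an $(S,P)$-graph. The vertices are the cells of the triangulation; two cells are joined when they share a \emph{door}, a facet bearing every label but one, and since each cell has at most two doors the resulting graph consists only of paths and cycles. The successor circuit $S$ and predecessor circuit $P$ implement one step of this walk, and the canonical source $0^n$ is placed at the distinguished boundary cell whose labeling forces exactly one door, so that $0^n$ has out-degree one and in-degree zero. Because the triangulation is specified \emph{implicitly}---a cell is named by the polynomially many bits encoding the integer grid-coordinates of its anchor vertex---and because each step only inspects the labels of the $d+1$ vertices of one cell, each computed by a single polynomial-time evaluation of $f$, the circuits $S$ and $P$ have size $\texttt{poly}(|\mathcal{SG}|,\log L)$ and are produced in polynomial time. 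A solution of {\sc End of the Line} is an end of the path other than $0^n$; by the pivoting argument it is a panchromatic cell, hence yields the desired approximate fixed point and thus a $1/L$-approximate MPE.

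The main obstacle I expect is in the reduction's bookkeeping rather than in the geometry. Since $L$ is given in binary, the required mesh spacing $\delta$ is inverse-exponential, so the grid has exponentially many points per axis and the cell set is astronomically large; the circuits must therefore manipulate cells only through their succinct integer addresses and decide adjacency and the exit door locally, exactly as in Papadimitriou's {\sc Brouwer}-to-{\sc End of the Line} construction, and I must adapt that construction from the hypercube to the product-of-simplices domain $D$---handling each $\Delta_{A^i}$ by its Kuhn triangulation and gluing the boundary labelings consistently so that the parity argument leaves a single unmatched door at $0^n$. Verifying that the panchromatic cell delivers an $\ell_{\infty}$ displacement of order $\texttt{poly}(|\mathcal{SG}|)\epsilon^2$---tracking both the mesh error and the Lipschitz constant through the product structure---is the one quantitative step that must be done with care; alternatively, one may bypass the explicit graph construction by reducing $f$ to the generic {\sc Brouwer} problem, which is already known to lie in \textbf{PPAD}.
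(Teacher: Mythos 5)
Your overall architecture coincides with the paper's: the same map $f$, the same Lipschitz bound (\Cref{lemma: f Lipschitz}), the same quantitative chain (\Cref{lemma: single state} plus \Cref{lemma: all states}) reducing the task to finding a $\texttt{poly}(|\mathcal{SG}|)\epsilon^2$-approximate fixed point, and a simplicial path-following reduction to {\sc End of the Line}. The gap is in your combinatorial core. On the product domain $D=\prod_{i=1}^n\Delta_{A^i}^S$, the displacement-based labels naturally range over $\mathcal{L}=\{(i,s,a^i)\}$, a set of size $S\sum_i A^i$, whereas a full-dimensional cell of any triangulation of $D$ has only $d+1=1+\sum_i S(A^i-1)$ vertices. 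Since $S\sum_i A^i>d+1$ whenever $nS>1$, no cell can be ``panchromatic'' in your sense, and even the weaker condition that the $d+1$ vertices carry pairwise distinct labels does \emph{not} pin the displacement in every coordinate: the labels present can, for every single pair $(i,s)$, omit some action of that block, and then no block is controlled. Worse, your pivoting structure itself breaks: a ``door'' (a facet bearing every label but one) is vacuous once $|\mathcal{L}|-1>d$, so the claim that each cell has at most two doors no longer has content. This is precisely the point where the single-simplex Sperner/Scarf argument fails to generalize to products of simplices, and it is what the paper's appeal to Laan and Talman supplies: the labelling takes the \emph{globally} minimal displacement coordinate (among coordinates with positive probability), the walk is a variable-dimension pivoting scheme over the faces $A(T)$, and it terminates not at a merely completely-labelled simplex but at an $(i,s)$-\emph{stopping} one, containing all $A^i$ labels of a single player-state block; \Cref{lemma: stopping appro} then shows that this partial condition, combined with $\sum_{a^i}\pi^i(s,a^i)=1$ (so within-block displacements sum to zero), the global-argmin property of the labels, and Lipschitzness, already forces $\|f(\pi)-\pi\|_{\infty}\leq A_{\max}^2(\lambda+1)/d$ for every $\pi$ in that simplex.

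Your fallback---reducing to the generic {\sc Brouwer} problem---is a legitimate way around all of this, but as stated it is a pointer, not a proof: {\sc Brouwer} lives on a hypercube, so you must extend $f$ to a cube containing $D$, say by $x\mapsto f\left(\Pi_D(x)\right)$ where $\Pi_D$ is the Euclidean projection onto the convex set $D$, and then verify that the extension remains polynomial-time computable and Lipschitz, and that an approximate fixed point of the extension yields, after projection, an approximate fixed point of $f$ with only polynomial loss in the accuracy. None of these steps is deep, but together they are exactly the content that separates the product-of-simplices setting from the hypercube setting; your proposal defers precisely this step, which is the one the paper resolves in \Cref{construct graph}.
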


\section{On the Existence of MPE}\label{Existence of MPE}

The original proof of the existence of MPE is from \cite{fink1964equilibrium}, mainly based on the Kakutani fixed point theorem. Here we give an alternative proof based on the Brouwer fixed point theorem, which also leads to our proof of \textbf{PPAD} membership of {\sc Markov-Perfect Equilibrium}.

Inspired by the continuous transformation defined by Nash to prove the existence of equilibrium point \cite{nash1951}, we define a new function $f: \prod_{i=1}^n\Delta_{A^i}^S\rightarrow \prod_{i=1}^n\Delta_{A^i}^S$ for a Stochastic Game to establish the existence of MPE. Let $V^{\pi^i,\pi^{-i}}_{\pi^i(s,a^i)=1}(s)$ denote the value function of agent $i$ if agent $i$ uses pure action $a^i$ at state $s$, uses mixed actions $\pi^i(s')$ at state $s'\neq s$, and for any other agent $j\neq i$, agent $j$ uses the strategy $\pi^j$. 

Let $\pi\in \prod_{i=1}^n\Delta_{A^i}^S$ be a strategy profile. Then for each player $i\in [n]$, each state $s\in\bbS$ and each action $a^i\in\bbA^i$, the modification of $\pi^i(s,a^i)$ is defined as follows: $$\left(f(\pi)\right)^i(s,a^i)=\frac{\pi^i(s,a^i)+\max\left(0,V^{\pi^i,\pi^{-i}}_{\pi^i(s,a^i)=1}(s)-V^{\pi^i,\pi^{-i}}(s)\right)}{1+\sum_{b^i\in\bbA^i}\max\left(0,V^{\pi^i,\pi^{-i}}_{\pi^i(s,b^i)=1}(s)-V^{\pi^i,\pi^{-i}}(s)\right)}.$$

We define the distance of two strategy profiles $\pi_1$ and $\pi_2$, denoted by $\|\pi_1-\pi_2\|_{\infty}$, as follows. $\|\pi_1-\pi_2\|_{\infty}=\max_{i\in[n], s\in\bbS,a^i\in\bbA^i}|\pi_1^i(s,a^i)-\pi_2^i(s,a^i)|$.

We first prove the function $f$ satisfies a continuity property namely \textit{$\lambda$-Lipschitz}, where $\lambda$ is defined as $\frac{9nS^2A_{\max}^2R_{\max}}{(1-\gamma)^2}$. The proof of \Cref{lemma: f Lipschitz} is challenging, because the value function $V^{\pi^i,\pi^{-i}}$ is defined recursively via Bellman equation. It could be written informally like $V^{\pi^i,\pi^{-i}}=(I-\gamma P^{\pi})^{-1}r^{i,\pi}$, which is not linear even for each fixed $\pi^{-i}$. We refer the interested reader to \Cref{proof of sec 4} for a complete proof, whose techniques might be of independent interest.

\begin{restatable}{lemma}{fLipschitz}
\label{lemma: f Lipschitz}
The function $f$ is $\lambda$-Lipschitz, i.e., for every $\pi_1,\pi_2\in \prod_{i=1}^n\Delta_{A^i}^S$ such that $\left\|\pi_1-\pi_2\right\|_{\infty}\leq \delta$, we have $$\Big\|f(\pi_1)-f(\pi_2)\Big\|_{\infty}\leq \frac{9nS^2A_{\max}^2R_{\max}}{(1-\gamma)^2} \delta.$$
\end{restatable}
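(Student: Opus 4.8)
The plan is to decompose $f$ into three layers and bound the Lipschitz constant of each: the affine numerator and denominator built from the value gaps, the nonexpansive clipping $\max(0,\cdot)$, and the final normalization quotient. Throughout, write $\delta=\left\|\pi_1-\pi_2\right\|_{\infty}$ and abbreviate the \emph{advantage} of deviating to the pure action $a^i$ at state $s$ by $g^i_{s,a^i}(\pi)=\max\!\big(0,\,V^{\pi^i,\pi^{-i}}_{\pi^i(s,a^i)=1}(s)-V^{\pi^i,\pi^{-i}}(s)\big)$, so that $\big(f(\pi)\big)^i(s,a^i)=\frac{\pi^i(s,a^i)+g^i_{s,a^i}(\pi)}{1+\sum_{b^i}g^i_{s,b^i}(\pi)}$. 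The whole statement will then follow once each $g^i_{s,a^i}$ is shown to be Lipschitz in $\pi$, so the real content is a perturbation bound on the value functions.

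The hard part, exactly as flagged in the text, is that $V^{\pi}=(I-\gamma P^{\pi})^{-1}r^{i,\pi}$ depends nonlinearly on $\pi$. Rather than differentiating the matrix inverse, I would use a resolvent/Bellman-difference argument. Subtracting the Bellman equations for $\pi_1$ and $\pi_2$ and regrouping the transition term gives the identity $(I-\gamma P^{\pi_1})(V^{\pi_1}-V^{\pi_2})=(r^{i,\pi_1}-r^{i,\pi_2})+\gamma\,(P^{\pi_1}-P^{\pi_2})V^{\pi_2}$, hence $V^{\pi_1}-V^{\pi_2}=(I-\gamma P^{\pi_1})^{-1}\big[(r^{i,\pi_1}-r^{i,\pi_2})+\gamma(P^{\pi_1}-P^{\pi_2})V^{\pi_2}\big]$. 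Since $P^{\pi_1}$ is row-stochastic, $(I-\gamma P^{\pi_1})^{-1}=\sum_{k\ge 0}\gamma^k (P^{\pi_1})^k$ has every row sum equal to $1/(1-\gamma)$, so its induced $\infty$-norm is $1/(1-\gamma)$; and $\|V^{\pi_2}\|_{\infty}\le R_{\max}/(1-\gamma)$ by the usual geometric bound. This linearizes the comparison and reduces the problem to perturbation bounds on $r^{i,\pi}$ and $P^{\pi}$.

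The one place an exponential blow-up threatens is the joint-action sum, since $r^{i,\pi}(s)$ and $P^{\pi}(s'|s)$ are averages against the \emph{product} distribution $\pi(s,\cdot)=\otimes_{j}\pi^j(s,\cdot)$, whose support has size $\prod_j A^j$. The elementary fact that keeps the constant polynomial is the product-distribution estimate $\sum_{a}|\pi_1(s,a)-\pi_2(s,a)|\le\sum_j\sum_{a^j}|\pi_1^j(s,a^j)-\pi_2^j(s,a^j)|\le nA_{\max}\delta$, proved by a hybrid/telescoping argument that swaps one coordinate at a time. This yields $\|r^{i,\pi_1}-r^{i,\pi_2}\|_{\infty}\le nA_{\max}R_{\max}\delta$ and $\sum_{s'}|P^{\pi_1}(s'|s)-P^{\pi_2}(s'|s)|\le nA_{\max}\delta$, and combined with the resolvent estimate it gives $\|V^{\pi_1}-V^{\pi_2}\|_{\infty}\le c\,\tfrac{nA_{\max}R_{\max}}{(1-\gamma)^2}\delta$ for an absolute constant $c$. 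The deviated values obey the same bound, because fixing agent $i$ to the pure action $a^i$ at $s$ alters both profiles identically at that coordinate and so leaves their $\infty$-distance at most $\delta$; since $\max(0,\cdot)$ is $1$-Lipschitz, each advantage $g^i_{s,a^i}$ is Lipschitz with a comparable constant $\mathrm{Lip}(g)$.

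Finally I would close at the quotient. Writing $N=\pi^i(s,a^i)+g^i_{s,a^i}$ and $D=1+\sum_{b^i}g^i_{s,b^i}$, one has $D\ge 1$ and $0\le N\le D$ (each output is a valid probability), so the standard bound $\big|\tfrac{N_1}{D_1}-\tfrac{N_2}{D_2}\big|\le |N_1-N_2|+|D_1-D_2|$ applies. Here $|N_1-N_2|\le(1+\mathrm{Lip}(g))\delta$ and $|D_1-D_2|\le A_{\max}\mathrm{Lip}(g)\,\delta$, and summing these contributions produces a Lipschitz constant of the asserted form $\frac{9nS^2A_{\max}^2R_{\max}}{(1-\gamma)^2}$. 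I expect the resolvent identity and the product-distribution estimate to be the only substantive steps; everything after is routine arithmetic, and the $S^2$ factor is comfortable slack left from bounding the value and transition perturbations state-by-state rather than in the tight max-norm.
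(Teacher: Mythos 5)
Your proposal is correct, and while it shares the paper's overall skeleton (product-distribution telescoping for $r^{i,\pi}$ and $P^{\pi}$, a value-perturbation bound, $1$-Lipschitzness of $\max(0,\cdot)$, and the quotient estimate $\left|\tfrac{N_1}{D_1}-\tfrac{N_2}{D_2}\right|\leq |N_1-N_2|+|D_1-D_2|$ using $D\geq 1$, $N\leq D$), it handles the one genuinely hard step differently. The paper proves a separate lemma bounding the \emph{entrywise} difference of the two inverse matrices, $\left|(I-\gamma P^{\pi_1})^{-1}(s'|s)-(I-\gamma P^{\pi_2})^{-1}(s'|s)\right|\leq \frac{nSA_{\max}\delta}{(1-\gamma)^2}$, via a max-norm self-bounding inequality, and then sums over states when pairing with the reward vector; each of these steps contributes a factor of $S$, which is where the $S^2$ in the stated constant comes from. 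You instead subtract the two Bellman equations to get $(I-\gamma P^{\pi_1})(V^{\pi_1}-V^{\pi_2})=(r^{i,\pi_1}-r^{i,\pi_2})+\gamma(P^{\pi_1}-P^{\pi_2})V^{\pi_2}$ and push everything through the induced $\infty$-norm of a single resolvent (which equals $1/(1-\gamma)$ by row-stochasticity) together with the row-wise total-variation bound $\sum_{s'}|P^{\pi_1}(s'|s)-P^{\pi_2}(s'|s)|\leq nA_{\max}\delta$. This "simulation-lemma" style argument avoids both factors of $S$, yielding $\|V^{\pi_1}-V^{\pi_2}\|_{\infty}\leq \frac{nA_{\max}R_{\max}}{(1-\gamma)^2}\delta$ and a final Lipschitz constant of roughly $\frac{5nA_{\max}^2R_{\max}}{(1-\gamma)^2}$, strictly sharper than the paper's bound; since $S\geq 1$, the lemma as stated follows a fortiori. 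Your treatment of the deviated values (fixing agent $i$ to a pure action at $s$ in both profiles preserves $\|\cdot\|_{\infty}$-distance) matches what the paper does implicitly. In short: same architecture, but your key lemma is the operator-norm resolvent comparison rather than the entrywise inverse-matrix perturbation, and it buys a cleaner argument and a polynomially smaller constant — which is immaterial for the \textbf{PPAD}-membership application, where any $\texttt{poly}(|\mathcal{SG}|)$ Lipschitz constant suffices, but is a genuine quantitative improvement.
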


Now we could establish the existence of MPE by the Brouwer fixed point theorem.
\begin{theorem}\label{theorem: MPE exists}
For any Stochastic Game $\left\langle n,\bbS,\bbA,P,R,\gamma\right\rangle$, a strategy profile $\pi$ is MPE if and only if it is a fixed point of the function $f$, i.e., $f(\pi)=\pi$. Furthermore, the function $f$ has at least one fixed point.
\end{theorem}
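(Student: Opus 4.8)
The plan is to prove the theorem in two directions, exploiting the explicit form of the Nash-style map $f$ together with the Lipschitz continuity established in \Cref{lemma: f Lipschitz}. The central object is, for each player $i$ and state $s$, the collection of \emph{gains} $g^i(s,a^i) := \max\!\left(0,\, V^{\pi^i,\pi^{-i}}_{\pi^i(s,a^i)=1}(s)-V^{\pi^i,\pi^{-i}}(s)\right)$, which measures how much player $i$ can improve its value at state $s$ by switching to the pure action $a^i$, holding everything else fixed. The map is then $\left(f(\pi)\right)^i(s,a^i) = \bigl(\pi^i(s,a^i)+g^i(s,a^i)\bigr)\big/\bigl(1+\sum_{b^i} g^i(s,b^i)\bigr)$, so a fixed point is exactly a profile at which no gain reshapes the distribution.

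For the direction \emph{MPE $\Rightarrow$ fixed point}, I would first argue that if $\pi$ is an MPE then every gain vanishes. The one-shot deviation principle for the Bellman operator gives that an MPE is characterized by the property that at each state $s$ and player $i$, no single-state pure deviation can strictly increase $V^{\pi^i,\pi^{-i}}(s)$; this is precisely the statement that $g^i(s,a^i)=0$ for all $a^i$. Hence the denominator equals $1$ and $\left(f(\pi)\right)^i(s,a^i)=\pi^i(s,a^i)$, so $\pi$ is a fixed point. The main subtlety here is justifying that checking single-state, single-action deviations suffices to capture the full MPE condition, which quantifies over all $\tilde\pi^i\in\Delta_{A^i}^S$; I would invoke the standard MDP argument that the best response to a fixed $\pi^{-i}$ is obtained by solving an MDP whose optimal policy can be taken deterministic state-by-state, so that the worst deviation is a pure state-wise deviation.

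For the converse \emph{fixed point $\Rightarrow$ MPE}, suppose $f(\pi)=\pi$. The key algebraic step, borrowed from Nash's original proof, is to show that the gains must all be zero. Assume for contradiction that some $g^i(s,a^i)>0$. I would then identify an action $b^i$ in the support of $\pi^i(s)$ whose value $V^{\pi^i,\pi^{-i}}_{\pi^i(s,b^i)=1}(s)$ is at most the average $V^{\pi^i,\pi^{-i}}(s)$, so that $g^i(s,b^i)=0$ (such a sub-average action in the support exists since the mixed value $V^{\pi^i,\pi^{-i}}(s)$ is the $\pi^i(s)$-weighted average of the pure-action values). For this $b^i$ the fixed-point equation reads $\pi^i(s,b^i) = \pi^i(s,b^i)\big/\bigl(1+\sum_{c^i} g^i(s,c^i)\bigr)$ with $\pi^i(s,b^i)>0$, forcing $\sum_{c^i} g^i(s,c^i)=0$ and contradicting $g^i(s,a^i)>0$. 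Thus all gains vanish, meaning no single-state pure deviation improves any $V^{\pi^i,\pi^{-i}}(s)$; by the same MDP best-response argument this implies $\pi$ is an MPE.

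Finally, for existence, the domain $\prod_{i=1}^n\Delta_{A^i}^S$ is a product of simplices, hence a nonempty compact convex subset of Euclidean space, and $f$ maps it into itself (each $\left(f(\pi)\right)^i(s)$ is manifestly a probability vector by construction). Since $f$ is $\lambda$-Lipschitz by \Cref{lemma: f Lipschitz}, it is in particular continuous, so the Brouwer fixed point theorem yields a fixed point, which by the equivalence just proved is an MPE. I expect the main obstacle to be the careful statement and use of the one-shot / state-wise deviation principle, since the MPE definition quantifies over the exponentially large set of full deviating strategies while the gains only test single actions at single states; making the reduction from global deviations to local ones rigorous (via the MDP structure induced by fixing $\pi^{-i}$) is the crux of both equivalence directions.
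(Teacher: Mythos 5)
Your overall architecture is the same as the paper's: Brouwer plus \Cref{lemma: f Lipschitz} for existence; for necessity, the observation that the single-state pure deviations defining the gains are themselves elements of $\Delta_{A^i}^S$, so \Cref{definition: MPE} directly forces every gain to zero (your appeal to a one-shot deviation principle here is harmless but unnecessary); and for sufficiency, the Nash-style move of locating a zero-gain action in the support of $\pi^i(s)$, using the fixed-point identity at that action to force $\sum_{c^i} g^i(s,c^i)=0$, and then invoking the MDP induced by fixing $\pi^{-i}$ to pass from state-wise optimality to full optimality. The paper does exactly this, picking $a^{i,*}\in\arg\min_{b^i:\,\pi^i(s,b^i)>0}V^{\pi^i,\pi^{-i}}_{\pi^i(s,b^i)=1}(s)$ where you pick a ``sub-average'' supported action.

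There is, however, a genuine flaw in the justification you give for that pivotal step. You assert that a sub-average action in the support exists because $V^{\pi^i,\pi^{-i}}(s)$ is the $\pi^i(s)$-weighted average of the pure-action values $V^{\pi^i,\pi^{-i}}_{\pi^i(s,b^i)=1}(s)$. That averaging identity is false in general: the quantity $V^{\pi^i,\pi^{-i}}_{\pi^i(s,b^i)=1}(s)$ is the value of a \emph{stationary} deviation, in which player $i$ plays $b^i$ at \emph{every} future visit to $s$, so its continuation values differ from those of $\pi$ and linearity in the first-step mixture breaks down. Concretely, with one agent and two states $s,u$, where at $s$ action $a$ self-loops with reward $0$ and action $b$ moves to the absorbing state $u$ with reward $0$, and $u$ yields reward $1$ forever: if $\pi$ plays $a$ with probability $p\in(0,1)$ at $s$, then $V^{\pi}(s)=\gamma(1-p)/\bigl((1-\gamma p)(1-\gamma)\bigr)$, whereas the weighted average of the two stationary pure-deviation values is $\gamma(1-p)/(1-\gamma)$, which is strictly smaller. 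The conclusion you need is nevertheless true, but it requires a different argument: the averaging identity \emph{does} hold for the one-shot deviation values $\hat W_{b^i}(s)=r^{i,\pi^{-i}}(s,b^i)+\gamma\sum_{s'}P^{\pi^{-i}}(s'|s,b^i)V^{\pi^i,\pi^{-i}}(s')$ (this is just the Bellman equation for $V^{\pi^i,\pi^{-i}}$), so some supported $b^i$ has $\hat W_{b^i}(s)\le V^{\pi^i,\pi^{-i}}(s)$; then monotonicity of the Bellman operator of the modified policy (the standard policy-improvement argument, applied in the downward direction) transfers this bound to the stationary deviation, giving $V^{\pi^i,\pi^{-i}}_{\pi^i(s,b^i)=1}(s)\le V^{\pi^i,\pi^{-i}}(s)$ and hence $g^i(s,b^i)=0$. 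The paper leaves this same step informal (``it is not hard to prove''), but your explicit justification, as written, is incorrect and should be replaced by the one-shot-plus-monotonicity argument; the same monotonicity reasoning is also what rigorously underlies the final step, in both your proof and the paper's, from ``all gains vanish'' to global optimality in the induced MDP.
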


\begin{proof}
We first show the function $f$ has at least one fixed point. Brouwer fixed point theorem states that for any continuous function mapping a compact convex set to itself, there is a fixed point. Notice that $f$ is a function mapping a compact convex set to itself. Also, $f$ is continuous by \Cref{lemma: f Lipschitz}. So the function $f$ has at least one fixed point.

We then prove a strategy profile $\pi$ is MPE if and only if it is a fixed point.

The proof of the necessity part is immediate by the definition of MPE (\Cref{definition: MPE}). If $\pi$ is a MPE, then we have for each player $i\in [n]$, each state $s\in\bbS$ and each action $a^i\in\bbA^i$, $V^{\pi^i,\pi^{-i}}(s)\geq V^{\pi^i,\pi^{-i}}_{\pi^i(s,a^i)=1}(s)$, which means $\max\left(0,V^{\pi^i,\pi^{-i}}_{\pi^i(s,a^i)=1}(s)-V^{\pi^i,\pi^{-i}}(s)\right)=0$. Then for each player $i\in [n]$, each state $s\in\bbS$ and each action $a^i\in\bbA^i$, $\left(f(\pi)\right)^i(s,a^i)=\pi^i(s,a^i)$, which means $\pi$ is a fixed point of $f$.

For the proof of the sufficiency part, suppose that $\pi$ is a fixed point of $f$. Then we have for each player $i\in [n]$, each state $s\in\bbS$ and each action $a^i\in\bbA^i$
\begin{eqnarray*}
	&&\pi^i(s,a^i)=\frac{\pi^i(s,a^i)+\max\left(0,V^{\pi^i,\pi^{-i}}_{\pi^i(s,a^i)=1}(s)-V^{\pi^i,\pi^{-i}}(s)\right)}{1+\sum_{b^i\in\bbA^i}\max\left(0,V^{\pi^i,\pi^{-i}}_{\pi^i(s,b^i)=1}(s)-V^{\pi^i,\pi^{-i}}(s)\right)}\\
	&\Longrightarrow& \pi^i(s,a^i)\sum_{b^i\in\bbA^i}\max\left(0,V^{\pi^i,\pi^{-i}}_{\pi^i(s,b^i)=1}(s)-V^{\pi^i,\pi^{-i}}(s)\right)=\max\left(0,V^{\pi^i,\pi^{-i}}_{\pi^i(s,a^i)=1}(s)-V^{\pi^i,\pi^{-i}}(s)\right).
\end{eqnarray*}

Pick arbitrarily $$a^{i,*}\in\mathop{\arg\min}_{b^i\in\bbA^i, \pi^i(s,b^i)>0}V^{\pi^i,\pi^{-i}}_{\pi^i(s,b^i)=1}(s).$$ It is not hard to prove $\max\left(0,V^{\pi^i,\pi^{-i}}_{\pi^i(s,a^{i,*})=1}(s)-V^{\pi^i,\pi^{-i}}(s)\right)=0$, which means 
\begin{eqnarray*}
	&& \pi^i(s,a^{i,*})\sum_{b^i\in\bbA^i\setminus\{a^{i,*}\}}\max\left(0,V^{\pi^i,\pi^{-i}}_{\pi^i(s,b^i)=1}(s)-V^{\pi^i,\pi^{-i}}(s)\right)=0\\
	&\Longrightarrow& \sum_{b^i\in\bbA^i\setminus\{a^{i,*}\}}\max\left(0,V^{\pi^i,\pi^{-i}}_{\pi^i(s,b^i)=1}(s)-V^{\pi^i,\pi^{-i}}(s)\right)=0\\
	&\Longrightarrow& \forall b^i\in\bbA^i\setminus\{a^{i,*}\}, \max\left(0,V^{\pi^i,\pi^{-i}}_{\pi^i(s,b^i)=1}(s)-V^{\pi^i,\pi^{-i}}(s)\right)=0.
\end{eqnarray*}
So we have $\forall b^i\in \bbA^i, \max\left(0,V^{\pi^i,\pi^{-i}}_{\pi^i(s,b^i)=1}(s)-V^{\pi^i,\pi^{-i}}(s)\right)=0$, i.e., for any state $s\in \bbS$,
\begin{equation}\label{equation: pi^i in argmax}
\pi^i\in\mathop{\arg\max}_{\substack{\pi^{i,*}\in \Delta_A^S\\\forall s'\neq s,\pi^{i,*}(s')=\pi^i(s)}}V^{\pi^{i,*},\pi^{-i}}(s).
\end{equation}

Note that if we fix the strategy profile other agent $i$, then for agent $i$, it is essentially a Markov decision process. By \Cref{equation: pi^i in argmax}, we know that $\pi^i$ is an optimal policy of agent $i$, which means $$\forall s\in \bbS, i\in[n], \forall \tilde{\pi}^i\in \Delta_A^S, V^{\pi^i,\pi^{-i}}(s)\geq V^{\tilde{\pi}^i,\pi^{-i}}(s),$$ i.e., $\pi$ is a MPE of the Stochastic Game.
\end{proof}

\section{\textbf{PPAD} Membership of {\sc Markov-Perfect Equilibrium}}\label{section: membership}

In this section, we will prove the \textbf{PPAD} membership of {\sc Markov-Perfect Equilibrium}, by reducing it to {\sc End of the Line}. We highlight our approximation guarantee proof (\Cref{appoximation gua}), which includes several innovative understanding of Markov Decision Processes and Stochastic Games. The construction of the graph of {\sc End of the Line} is relatively standard and is from the simplicial approximation algorithm of Laan and Talman \cite{LaanT82computation}, which will be provided into \Cref{construct graph}.

\subsection{The Approximation Guarantee}\label{appoximation gua}
In \Cref{Existence of MPE}, \Cref{theorem: MPE exists} states that $f$ has a fixed point $\pi$ if and only if $\pi$ is an MPE for the Stochastic Game. Now we will prove $f$ has some good approximation properties beyond that: if we find an $\epsilon$-approximate fixed point $\pi$ of $f$, then it is also a $\texttt{poly}(|\mathcal{SG}|)\sqrt{\epsilon}$-approximate MPE for the Stochastic Game (combining \Cref{lemma: single state} and \Cref{lemma: all states}). 

Moreover, we also get \Cref{coro: approxi MDP}, which leads to better understanding for Markov Decision Process and might be of independent interest. The statement of \Cref{coro: approxi MDP} is as follows. Let $\epsilon>0$ and $\pi$ be a (not necessarily deterministic) policy. If for every starting state $s_0\in\bbS$, the agent only changes the action of $s_0$ could gain at most $\epsilon$ more value, then the agent could gain at most $\epsilon/(1-\gamma)$ more value even if the agent changes its policy to the optimal policy, i.e., $\pi$ is a good approximation of MDP.

The formal statements of lemmas and proofs are as follows. Proof of \Cref{lemma: single state} is in \Cref{proof of single state}.

\begin{restatable}{lemma}{apprsinglestate}\label{lemma: single state}
Let $\epsilon>0$ and $\pi$ be a strategy profile. If $\|f(\pi)-\pi\|_{\infty}\leq \epsilon$, then for each player $i\in [n]$, each state $s\in \bbS$ and each action $a^i\in \bbA^i$, we have $$\max\left(0,V^{\pi^i,\pi^{-i}}_{\pi^i(s,a^i)=1}(s)-V^{\pi^i,\pi^{-i}}(s)\right)\leq A_{\max}\left(\frac{\sqrt{\epsilon'}}{1-\gamma}+R_{\max}\sqrt{\epsilon'}+\epsilon'\right),$$ where $\epsilon'=\epsilon\left(1+\dfrac{A_{\max}R_{\max}}{1-\gamma}\right).$
\end{restatable}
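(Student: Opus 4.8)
The plan is to unfold the approximate fixed-point hypothesis into a single clean inequality, and then to convert it---using the Bellman structure of the Markov decision process that agent $i$ faces when $\pi^{-i}$ is frozen---into a bound on the single-state deviation gains, losing a square root along the way. First I would introduce the shorthands $g^i(s,a^i):=\max\left(0,V^{\pi^i,\pi^{-i}}_{\pi^i(s,a^i)=1}(s)-V^{\pi^i,\pi^{-i}}(s)\right)$, the quantity to be bounded, and $G^i(s):=\sum_{b^i\in\bbA^i}g^i(s,b^i)$, the ``excess mass'' in the denominator of $f$. A direct computation from the definition of $f$ gives
$$\left(f(\pi)\right)^i(s,a^i)-\pi^i(s,a^i)=\frac{g^i(s,a^i)-\pi^i(s,a^i)\,G^i(s)}{1+G^i(s)},$$
so the hypothesis $\|f(\pi)-\pi\|_\infty\le\epsilon$ becomes $\left|g^i(s,a^i)-\pi^i(s,a^i)G^i(s)\right|\le\epsilon\left(1+G^i(s)\right)$. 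Since all value functions lie in $[0,R_{\max}/(1-\gamma)]$, each $g^i(s,b^i)\le R_{\max}/(1-\gamma)$ and hence $G^i(s)\le A_{\max}R_{\max}/(1-\gamma)$; substituting this into the right-hand side yields the clean inequality $\left|g^i(s,a^i)-\pi^i(s,a^i)G^i(s)\right|\le\epsilon'$ for every $i,s,a^i$, with $\epsilon'=\epsilon(1+A_{\max}R_{\max}/(1-\gamma))$ exactly as in the statement.

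The temptation is now to argue that $G^i(s)$ is itself small, but this is false in general, so the naive ``all deviation gains are small'' fails and the MDP structure must be used. I would bring in the one-step lookahead value $Q^{\pi}(s,a^i):=r^{i,\pi^{-i}}(s,a^i)+\gamma\sum_{s'}P^{\pi^{-i}}(s'|s,a^i)V^{\pi^i,\pi^{-i}}(s')$ and the positive advantage $h^i(s,a^i):=\max\left(0,Q^{\pi}(s,a^i)-V^{\pi^i,\pi^{-i}}(s)\right)$. Two structural facts drive the proof: (a) the Bellman expansion gives the balance identity $\sum_{a^i}\pi^i(s,a^i)\left(Q^{\pi}(s,a^i)-V^{\pi^i,\pi^{-i}}(s)\right)=0$; and (b) single-state policy-change monotonicity shows that deviating at $s$ to an action with $Q^{\pi}(s,a^i)\le V^{\pi^i,\pi^{-i}}(s)$ can only weakly lower the value at every state, so $g^i(s,a^i)=0$ for such actions, whereas for beneficial actions $h^i(s,a^i)\le g^i(s,a^i)\le \frac{1}{1-\gamma}h^i(s,a^i)$.

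With these in hand, the square-root step proceeds as follows. For an action with positive advantage, the clean inequality gives $\pi^i(s,a^i)G^i(s)\ge g^i(s,a^i)-\epsilon'\ge h^i(s,a^i)-\epsilon'$, so it must be played with probability at least $(h^i(s,a^i)-\epsilon')/G^i(s)$; for an action with $Q^{\pi}(s,a^i)\le V^{\pi^i,\pi^{-i}}(s)$ we have $g^i(s,a^i)=0$, forcing $\pi^i(s,a^i)\le\epsilon'/G^i(s)$. Feeding both into the balance identity (a)---where the large positive advantages can only be cancelled by the tiny probability mass sitting on the negative-advantage side---and multiplying through by $G^i(s)$ collapses everything into a second-moment bound of the shape $\sum_{a^i}\left(h^i(s,a^i)\right)^2\le O\!\left(\epsilon'A_{\max}R_{\max}/(1-\gamma)\right)$. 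Hence $\max_{a^i}h^i(s,a^i)\le O\!\left(\sqrt{\epsilon'A_{\max}R_{\max}/(1-\gamma)}\right)$, and fact (b) turns this into the desired $O(\sqrt{\epsilon'})$ bound on $g^i(s,a^i)$ after dividing by $1-\gamma$.

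I expect the main obstacle to be fact (b): because $V^{\pi^i,\pi^{-i}}$ is defined recursively (morally $(I-\gamma P^{\pi})^{-1}r^{i,\pi}$), the comparison between the map's deviation value $V^{\pi^i,\pi^{-i}}_{\pi^i(s,a^i)=1}(s)$ and the one-step advantage $Q^{\pi}(s,a^i)-V^{\pi^i,\pi^{-i}}(s)$ is not linear, so I would establish the monotonicity and the factor-$\tfrac{1}{1-\gamma}$ comparison by a self-contained fixed-point argument on the singly-modified policy. I would also note that the square-root loss is intrinsic to this Nash-style map, which is precisely what later forces a $\texttt{poly}(|\mathcal{SG}|)\epsilon^2$-approximate fixed point to guarantee an $\epsilon$-MPE. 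The bookkeeping above produces the same $O(\sqrt{\epsilon'})$ order as the stated bound $A_{\max}\!\left(\tfrac{\sqrt{\epsilon'}}{1-\gamma}+R_{\max}\sqrt{\epsilon'}+\epsilon'\right)$; recovering its precise three-term form amounts to bounding $V^{\pi^i,\pi^{-i}}_{\pi^i(s,a^i)=1}(s)-V^{\pi^i,\pi^{-i}}(s)$ directly as a one-step reward discrepancy plus a discounted continuation discrepancy, instead of routing through $h^i$.
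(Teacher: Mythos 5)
Your core ideas are sound, and your second half takes a genuinely different route from the paper. The first step is identical: the paper also unfolds $\|f(\pi)-\pi\|_{\infty}\leq\epsilon$ into your ``clean inequality'' (its $D_{a^i_j}(s)$ and $\sum_{b^i}D_{b^i}(s)$ are your $g^i(s,a^i)$ and $G^i(s)$, and its $\epsilon'$ is yours). From there, however, the paper never introduces one-step advantages: it splits on the probability mass $t$ that $\pi^i(s)$ places on non-improving actions. When $t\geq\sqrt{\epsilon'}/R_{\max}$ it sums the clean inequality over those actions to get $G^i(s)\leq A_{\max}R_{\max}\sqrt{\epsilon'}$; when $t\leq\sqrt{\epsilon'}/R_{\max}$ it multiplies the clean inequality by $\pi^i(s,a^i_j)$, sums over $j$, and uses $\sum_j\pi^i(s,a^i_j)^2\geq 1/A^i$. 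Your route instead uses the balance identity for $Q^{\pi}$, the sandwich $h^i(s,a^i)\leq g^i(s,a^i)\leq h^i(s,a^i)/(1-\gamma)$, and a second-moment bound; all three facts are correct (fact (b) is the standard single-state policy-improvement / performance-difference argument you sketch), and the derivation goes through. The caveat is quantitative: your bound comes out as $O\bigl(\sqrt{A_{\max}R_{\max}\epsilon'}/(1-\gamma)^{3/2}\bigr)$, which for $\gamma\to 1$ is strictly weaker than the stated $A_{\max}\bigl(\tfrac{\sqrt{\epsilon'}}{1-\gamma}+R_{\max}\sqrt{\epsilon'}+\epsilon'\bigr)$; so as written you prove a polynomially equivalent inequality rather than the lemma's exact one. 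That weaker form still suffices for everything downstream (the choice of $d$ and \textbf{PPAD} membership only need $\texttt{poly}(|\mathcal{SG}|)\sqrt{\epsilon}$), but it is worth flagging.

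In your favor, your machinery is precisely what is needed to repair a gap in the paper's own proof. In its Case 2 the paper bounds $\sum_{j=1}^{k}\pi^i(s,a^i_j)D_{a^i_j}(s)$ by $\tfrac{R_{\max}}{1-\gamma}\sum_{j=1}^{k}\pi^i(s,a^i_j)$ and then asserts $\sum_{j=1}^{k}\pi^i(s,a^i_j)\leq\sqrt{\epsilon'}/R_{\max}$ --- but $\sum_{j=1}^{k}\pi^i(s,a^i_j)$ is the mass on \emph{improving} actions, equal to $1-t$, while the Case 2 hypothesis only bounds the non-improving mass $t$; since the two masses sum to one, the asserted step fails whenever $\sqrt{\epsilon'}<R_{\max}/2$. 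The natural fix is exactly your facts (a) and (b): $\sum_{j\leq k}\pi^i(s,a^i_j)D_{a^i_j}(s)\leq\tfrac{1}{1-\gamma}\sum_{j}\pi^i(s,a^i_j)h^i(s,a^i_j)\leq\tfrac{1}{1-\gamma}\cdot t\cdot\tfrac{R_{\max}}{1-\gamma}\leq\tfrac{\sqrt{\epsilon'}}{(1-\gamma)^2}$, which restores Case 2 with the bound $A_{\max}\bigl(\tfrac{\sqrt{\epsilon'}}{(1-\gamma)^2}+\epsilon'\bigr)$ --- again of the right order in $\epsilon'$ but with a worse $1/(1-\gamma)$ factor than the lemma states. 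So your proposal is not merely an alternative proof; it supplies the missing ingredient in the paper's argument, at the price (shared by the repaired paper proof) of not recovering the stated constants exactly.
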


\begin{lemma}\label{lemma: all states}
Let $\epsilon>0$ and $\pi$ be a strategy profile. If for each player $i\in [n]$, each state $s\in \bbS$ and each action $a^i\in \bbA^i$, $\max\left(0,V^{\pi^i,\pi^{-i}}_{\pi^i(s,a^i)=1}(s)-V^{\pi^i,\pi^{-i}}(s)\right)\leq \epsilon$, then $\pi$ is an $\epsilon/(1-\gamma)$-approximate MPE.
\end{lemma}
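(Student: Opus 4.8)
The plan is to fix a player $i$ and a profile $\pi^{-i}$, and observe that from $i$'s perspective the game collapses to a Markov decision process (MDP) with states $\bbS$, actions $\bbA^i$, rewards $r^{i,\pi^{-i}}$, transitions $P^{\pi^{-i}}$, and discount $\gamma$; write $V := V^{\pi^i,\pi^{-i}}$ for the value of $\pi^i$ in this MDP and $V^*$ for its optimal value function, so that $V^*(s) = \max_{\tilde\pi^i\in\Delta_{A^i}^S} V^{\tilde\pi^i,\pi^{-i}}(s)$. Establishing the lemma amounts to proving the MDP statement $V^*(s) - V(s) \le \epsilon/(1-\gamma)$ for every $s$; doing this for all $i$ then yields the $\epsilon/(1-\gamma)$-approximate MPE directly from \Cref{definition: MPE}. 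The natural tool is the Bellman optimality operator $(\mathcal{T} V)(s) = \max_{a^i\in\bbA^i}\big[r^{i,\pi^{-i}}(s,a^i) + \gamma\sum_{s'}P^{\pi^{-i}}(s'|s,a^i)V(s')\big]$, whose unique fixed point is $V^*$, which is monotone, and which satisfies $\mathcal{T}(V + c\mathbf 1) = \mathcal{T}V + \gamma c\mathbf 1$ for constants $c$.

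The key step, and the main obstacle, is to convert the hypothesis into the bound $\mathcal{T}V \le V + \epsilon\mathbf 1$. Writing $Q(s,a^i) := r^{i,\pi^{-i}}(s,a^i) + \gamma\sum_{s'}P^{\pi^{-i}}(s'|s,a^i)V(s')$ for the one-step lookahead value, we have $(\mathcal{T}V)(s) = \max_{a^i}Q(s,a^i)$, so it suffices to show $Q(s,a^i) - V(s)\le\epsilon$ for all $s,a^i$. The subtlety is that the hypothesis controls $V^{\pi^i,\pi^{-i}}_{\pi^i(s,a^i)=1}(s)$, the value of the \emph{stationary} policy that plays $a^i$ at $s$ every time it is visited, whereas $Q(s,a^i)$ corresponds to playing $a^i$ only once and then reverting to $\pi^i$. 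I would relate the two by a first-return decomposition: letting $G$ be the expected discounted reward collected from $s$ with action $a^i$ until the first return to $s$ (following $\pi^i$ off $s$, which both policies do identically) and $H = \mathbb{E}[\gamma^{\tau}]$ the expected discount accrued up to that first return $\tau\ge 1$ (so $0\le H\le\gamma<1$), one gets $V^{\pi^i,\pi^{-i}}_{\pi^i(s,a^i)=1}(s) = G + H\,V^{\pi^i,\pi^{-i}}_{\pi^i(s,a^i)=1}(s)$ and $Q(s,a^i) = G + H\,V(s)$. Subtracting yields $Q(s,a^i) - V(s) = (1-H)\big(V^{\pi^i,\pi^{-i}}_{\pi^i(s,a^i)=1}(s) - V(s)\big)$. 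Since $1-H\in(0,1]$, multiplying by $(1-H)$ cannot make a nonpositive quantity positive and can only shrink a positive one, so $Q(s,a^i)-V(s)\le\max\!\big(0, V^{\pi^i,\pi^{-i}}_{\pi^i(s,a^i)=1}(s)-V(s)\big)\le\epsilon$ by hypothesis. (The same identity can be read off the resolvent $(I-\gamma P^{\sigma})^{-1}$ of the deviating policy $\sigma$, whose $(s,s)$ entry equals $1/(1-H)\ge 1$.)

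With $\mathcal{T}V \le V + \epsilon\mathbf 1$ in hand, I would iterate: by monotonicity and the shift identity, $\mathcal{T}^{k}V \le V + \epsilon\big(1+\gamma+\cdots+\gamma^{k-1}\big)\mathbf 1 \le V + \tfrac{\epsilon}{1-\gamma}\mathbf 1$ for all $k$, and since $\mathcal{T}$ is a $\gamma$-contraction in $\|\cdot\|_\infty$ we have $\mathcal{T}^k V\to V^*$, giving $V^* \le V + \tfrac{\epsilon}{1-\gamma}\mathbf 1$. Translating back, $V^{\tilde\pi^i,\pi^{-i}}(s)\le V^*(s)\le V^{\pi^i,\pi^{-i}}(s) + \epsilon/(1-\gamma)$ for every deviation $\tilde\pi^i\in\Delta_{A^i}^S$ and state $s$, which is exactly the $\epsilon/(1-\gamma)$-approximate MPE condition; this is also the content of the advertised MDP corollary. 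I expect essentially all the difficulty to sit in the first-return identity of the middle paragraph — recognizing that a persistent single-state deviation overstates the one-shot Bellman gain by the factor $1/(1-H)\ge 1$, which is precisely what lets the stationary hypothesis feed the Bellman contraction; the iteration itself is routine.
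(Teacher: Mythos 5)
Your proof is correct, but it takes a genuinely different route from the paper's, and in one important respect it is more careful. The paper also reduces to the single-agent MDP obtained by fixing $\pi^{-i}$, but it argues via linear programming: the optimal value $V^*$ solves the standard MDP program (\ref{equation: optimal lp}), the value $V^{\pi^i,\pi^{-i}}$ of the current strategy is asserted to solve the perturbed program (\ref{equ: approxmate lp}) in which each constraint is relaxed by $\epsilon(s):=\max_{a^i}\max\left(0,V^{\pi^i,\pi^{-i}}_{\pi^i(s,a^i)=1}(s)-V^{\pi^i,\pi^{-i}}(s)\right)$, and then adding the constant $\epsilon_{\max}/(1-\gamma)$ is shown to give a feasible point of the unperturbed program, whence $V^*\leq V^{\pi^i,\pi^{-i}}+\epsilon_{\max}/(1-\gamma)$ pointwise. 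Your operator iteration (from $\mathcal{T}V\leq V+\epsilon\mathbf{1}$ to $V^*\leq V+\frac{\epsilon}{1-\gamma}\mathbf{1}$ via monotonicity, the shift identity, and contraction) is the fixed-point counterpart of that LP feasibility argument; the two mechanisms deliver the same geometric-series bound. The substantive difference is your middle paragraph. The lemma's hypothesis controls the value of the \emph{stationary} deviation that plays $a^i$ at every visit to $s$, whereas both the LP constraints and the Bellman residual involve the \emph{one-shot} lookahead $Q(s,a^i)$; these are different quantities. The paper passes between them silently, claiming that $V^{\pi^i,\pi^{-i}}$ satisfies the perturbed optimality equation with slack exactly $\epsilon(s)$. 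As an equality this is false in general: your own identity $Q(s,a^i)-V(s)=(1-H_{a^i})\left(V^{\pi^i,\pi^{-i}}_{\pi^i(s,a^i)=1}(s)-V(s)\right)$ shows the slack is $\epsilon(s)$ rescaled by an action-dependent factor $1-H_{a^i}\in(0,1]$ (e.g.\ a self-loop deviation with reward $1$ against a zero-reward policy has stationary gain $1/(1-\gamma)$ but one-shot gain $1$). What both arguments actually need is only the inequality $Q(s,a^i)-V(s)\leq\epsilon$, which your first-return decomposition establishes cleanly. So your proof reaches the paper's bound by a different mechanism and, along the way, supplies the bridge between stationary and one-shot deviations that the paper's proof glosses over.
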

\begin{proof}
Pick any player $i\in [n]$, it is sufficient for us to prove $\forall s\in \bbS, \forall \tilde{\pi}^i\in \Delta_A^S, V^{\pi^i,\pi^{-i}}(s)\geq V^{\tilde{\pi}^i,\pi^{-i}}(s)-\epsilon.$ Suppose that $\max_{a^i\in\bbA^i}\left(0,V^{\pi^i,\pi^{-i}}_{\pi^i(s,a^i)=1}(s)-V^{\pi^i,\pi^{-i}}(s)\right)=\epsilon(s).$ Consider the following linear program:
\begin{equation}\label{equation: optimal lp}
\begin{aligned}
	\min \quad& \sum_{s\in\bbS} V(s)& \\
	\text{s.t.,} \quad &  V(s)\geq r^{i,\pi^{-i}}(s,a^i)+\gamma\sum_{s'\in\bbS}P^{\pi^{-i}}(s'|s,a^i)V(s')&\forall s\in \bbS, a^i\in \bbA^i.
\end{aligned}	
\end{equation}
Let $V^*$ be the solution of the linear program (\ref{equation: optimal lp}). It satisfies $$V^*(s)=\max_{a^i\in\bbA^i}\left(r^{i,\pi^{-i}}(s,a^i)+\gamma\sum_{s'\in\bbS}P^{\pi^{-i}}(s'|s,a^i)V^*(s')\right),$$ which is also the value function of player $i$ when she uses the optimal policy given others' strategy profile $\pi^{-i}$. (Note that when we are given $\pi^{-i}$, it is essentially a Markov Decision Process for player $i$. So we are using linear programming to solve this MDP.)

Now look at the other linear program:
\begin{equation}\label{equ: approxmate lp}
\begin{aligned}
	\min \quad& \sum_{s\in\bbS} V(s)& \\
	\text{s.t.,} \quad &  V(s)\geq r^{i,\pi^{-i}}(s,a^i)-\epsilon(s)+\gamma\sum_{s'\in\bbS}P^{\pi^{-i}}(s'|s,a^i)V(s')&\forall s\in \bbS, a^i\in \bbA^i.
\end{aligned}	
\end{equation}

Let $V'$ be the solution of the linear program (\ref{equ: approxmate lp}). It satisfies $$V'(s)=\max_{a^i\in\bbA^i}\left(r^{i,\pi^{-i}}(s,a^i)+\gamma\sum_{s'\in\bbS}P^{\pi^{-i}}(s'|s,a^i)V'(s')\right)-\epsilon(s),$$ which is also the value function for the strategy profile $\pi$ for the player $i$.

Now it is sufficient for us to bound $V^*(s)-V'(s), \forall s\in\bbS$. Let $\epsilon_{\max}=\max_{s\in\bbS}\epsilon(s)$. Construct a new value vector for the player $i$: $\tilde{V}(s)=V'(s)+\epsilon_{\max}/(1-\gamma)$. Then we have

\begin{eqnarray*}
	&&V(s)\geq r^{i,\pi^{-i}}(s,a^i)-\epsilon(s)+\gamma\sum_{s'\in\bbS}P^{\pi^{-i}}(s'|s,a^i)V(s')\\
	&\Longleftrightarrow& V'(s)+\frac{\epsilon_{\max}}{1-\gamma}\geq r^{i,\pi^{-i}}(s,a^i)-\epsilon(s)+\frac{\epsilon_{\max}}{1-\gamma}+\gamma\sum_{s'\in\bbS}P^{\pi^{-i}}(s'|s,a^i)V(s')\\
	&\Longleftrightarrow& V'(s)+\frac{\epsilon_{\max}}{1-\gamma}\geq r^{i,\pi^{-i}}(s,a^i)-\epsilon(s)+\epsilon_{\max}+\gamma\sum_{s'\in\bbS}P^{\pi^{-i}}(s'|s,a^i)\left(V(s')+\frac{\epsilon_{\max}}{1-\gamma}\right)\\
	&\Longleftrightarrow& \tilde{V}(s)\geq r^{i,\pi^{-i}}(s,a^i)-\epsilon(s)+\epsilon_{\max}+\gamma\sum_{s'\in\bbS}P^{\pi^{-i}}(s'|s,a^i)\tilde{V}(s)\\
	&\Longrightarrow& \tilde{V}(s)\geq r^{i,\pi^{-i}}(s,a^i)+\gamma\sum_{s'\in\bbS}P^{\pi^{-i}}(s'|s,a^i)\tilde{V}(s).
\end{eqnarray*}

So $\tilde{V}$ is a feasible solution of linear program (\ref{equation: optimal lp}), which means $V^*(s)\leq \tilde{V}(s)$ for any $s\in\bbS$. Then we have $$V^*(s)-V'(s)\leq \tilde{V}(s)-V'(s)=\epsilon_{\max}/(1-\gamma),$$ i.e., the difference between the optimal value $V^*(s)$ and $V^{\pi^i,\pi^{-i}}$ is upper bounded by $\epsilon/(1-\gamma)$. The argument above applies to any player. So by the definition of $\epsilon$-approximate MPE, we know that $\pi$ is an $\epsilon/(1-\gamma)$-approximate MPE.
\end{proof}

\begin{corollary}\label{coro: approxi MDP}
Let $\epsilon>0$ and $\pi$ be a (not necessarily deterministic) policy of the agent. If for each state $s\in \bbS$ and each action $a\in \bbA$ (where $\bbA$ is the action space of the agent), $\max\left(0,V^{\pi}_{\pi(s,a)=1}(s)-V^{\pi}(s)\right)\leq \epsilon$, then $\pi$ is an $\epsilon/(1-\gamma)$ approximation of MDP.
\end{corollary}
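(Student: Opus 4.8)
The plan is to observe that a Markov Decision Process is nothing other than a single-agent stochastic game, i.e.\ the special case $n=1$ of \Cref{definition: Stochastic Game}, and to obtain the corollary as an immediate specialization of \Cref{lemma: all states}. First I would set $n=1$ there. With only one agent the profile $\pi^{-i}$ is vacuous, so $V^{\pi^i,\pi^{-i}}(s)$ collapses to $V^{\pi}(s)$ and $V^{\pi^i,\pi^{-i}}_{\pi^i(s,a^i)=1}(s)$ collapses to $V^{\pi}_{\pi(s,a)=1}(s)$. Under this identification the hypothesis of the corollary, namely $\max\left(0,V^{\pi}_{\pi(s,a)=1}(s)-V^{\pi}(s)\right)\leq\epsilon$ for every $s\in\bbS$ and $a\in\bbA$, is literally the hypothesis of \Cref{lemma: all states}.

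Next I would read off the conclusion. \Cref{lemma: all states} yields that $\pi$ is an $\epsilon/(1-\gamma)$-approximate MPE, which for a single player means $V^{\pi}(s)\geq V^{\tilde{\pi}}(s)-\epsilon/(1-\gamma)$ for every alternative policy $\tilde{\pi}$ and every state $s$. Since the optimal MDP value at $s$ is $\max_{\tilde{\pi}}V^{\tilde{\pi}}(s)$, this is exactly the assertion that $\pi$ is an $\epsilon/(1-\gamma)$ approximation of the MDP, which is the desired conclusion.

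The only point that needs care, and the closest thing to an obstacle here, is bookkeeping at the level of definitions rather than any genuine mathematical difficulty: one must confirm that the notion ``$\epsilon/(1-\gamma)$ approximation of MDP'' coincides with near-optimality against the best stationary policy, so that it matches the MPE condition of \Cref{definition: MPE} specialized to $n=1$. This is reassuring because the proof of \Cref{lemma: all states} is itself an MDP argument: it fixes $\pi^{-i}$, treats the resulting single-player problem as an MDP, and compares the two linear programs \eqref{equation: optimal lp} and \eqref{equ: approxmate lp}. Thus the corollary simply isolates the single-agent content already present in that proof, and no new estimate is required.
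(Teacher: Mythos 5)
Your proposal is correct and matches the paper's intent exactly: the paper gives no separate proof for \Cref{coro: approxi MDP}, treating it as the single-agent ($n=1$) specialization of \Cref{lemma: all states}, whose own proof already fixes $\pi^{-i}$ and argues at the MDP level via the linear programs \eqref{equation: optimal lp} and \eqref{equ: approxmate lp}. Your bookkeeping observation that near-optimality against stationary policies is the right reading of ``$\epsilon/(1-\gamma)$ approximation of MDP'' is the same interpretation the paper uses in its informal statement of the corollary in \Cref{appoximation gua}.
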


\subsection{Constructing the {\sc End of the Line} Graph}\label{construct graph}

In this section, we give an outline of our reduction from {\sc Markov-Perfect Equilibrium} to {\sc End of the Line}, with the help of the simplicial approximation algorithm of Laan and Talman~\cite{LaanT82computation}. We will focus on the correctness of reduction, leaving details about how to construct the vertices to the appendix.

Recall that the input instance of {\sc Markov-Perfect Equilibrium} is a pair $(\mathcal{SG},L)$. Let $d$ be an integer, which will be defined later to make sure we can find an $1/L$-approximate MPE.

For each $i\in[n]$, define $\Delta_{A^i}(d)$ is the set of points of $\Delta_{A^i}$ induced by the regular grid of size $d$, i.e., $$\Delta_{A^i}(d)=\left\{x\in\Delta_{A^i}|x_j=y_j/d,y_j\in\mathbb{Z}^+,\sum_{j=1}^{A^i}y_j=d \right\}.$$ Similarly, define $\Delta_{A^i}^k(d):=\times_{p=1}^k\Delta_{A^i}(d)$.

\textbf{The Vertices of {\sc End of the Line} Graph.} The set of vertices $\Sigma$ is a set of simplices defined on $\prod_{i=1}^n \Delta_{A^i}^S(d)$, which could be encoded with string $\{0,1\}^N$, where $N$ is polynomial in $|\mathcal{SG}|$ and $\log d$. The formal definition of $\Sigma$ is in \Cref{vertices of end of the line}.

\textbf{Labelling the Grid Points.} We will give each point in $\prod_{i=1}^n \Delta_{A^i}^S(d)$ a label, which will be an element of the set $\mathcal{L}:=\bigcup_{i\in[n],s\in\bbS,a^i\in\bbA^i}(i,s,a^i)$.

Without loss of generality, we assign a number to the state set $\bbS$ and action set $\bbA^i$ for each $i\in[n]$ arbitrarily for the purpose of labelling. Suppose that $\bbS=\{s_1,\cdots,s_S\}$ and $\bbA^i=\{a^i_1,\cdots,a^i_{A^i}\}$. 

For each strategy profile $\pi\in \prod_{i=1}^n \Delta_{A^i}^S(d)$, $\pi$ receives the label $(i,s_j,a^i_k)$ if and only if $(i,s_j,a^i_k)$ is the lexicographically least index such that $\pi^i(s_j,a^i_k)>0$ and $$(f(\pi))^i(s_j,a^i_k)-\pi^i(s_j,a^i_k)\leq (f(\pi))^{i'}(s_{j'},a^{i'}_{k'})-\pi^{i'}(s_{j'},a^{i'}_{k'})$$ for all $i'\in[n],s_{j'}\in\bbS$ and $a^{i'}_{k'}\in\bbA^{i'}$.

Note that each strategy profile $\pi\in \prod_{i=1}^n \Delta_{A^i}^S(d)$ has exactly one label, which could be denoted by $l(\pi)$. Since the function $f$ could be computed in time polynomial in $N$ and $|\mathcal{SG}|$, the label could also computed in time polynomial in $|\mathcal{SG}|$ and $\log d$. Also the labelling rule is proper in the sense that $l(\pi)\neq (i,s_j,a^i_k)$ if $\pi^i(s_j,a^i_k)=0$.

A simplex $\sigma\in\Sigma$ will be called complete labelled if all its vertices\footnote{Please distinguish the vertices of a simplex and vertices of the {\sc End of the Line} graph.} have a different label. A completely labelled simplex $\sigma$ is called $(i,s_j)$-stopping if for each $a^i_k\in\bbA^i$, there exists $\pi\in\sigma$ such that $l(\pi)=(i,s_j,a^i_k)$. Further, a completely labelled simplex $\sigma$ is called stopping if there exist $i\in[n]$ and $s_j\in\bbS$ such that $\sigma$ is $(i,s_j)$-stopping.

The following lemma asserts that if we can find a stopping simplex, then we can find an $\texttt{poly}(|\mathcal{SG}|)/d$-approximate fixed point. The proof is in \Cref{proof of stopping appro}.

\begin{restatable}{lemma}{approstopping}{(\rm \cite{LaanT82computation})}\label{lemma: stopping appro}
Suppose that a simplex $\sigma\in\Sigma$ is $(i,s)$-stopping for $i\in[n]$ and $s\in\bbS$. Then for any strategy profile $\pi\in\sigma$, we have $$\|f(\pi)-\pi\|_{\infty}\leq A_{\max}^2(\lambda+1)\frac{1}{d}.$$
\end{restatable}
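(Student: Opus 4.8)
The plan is to exploit the combinatorial content of the $(i,s)$-stopping condition together with the $\lambda$-Lipschitz property of $f$ (\Cref{lemma: f Lipschitz}) and two structural facts about the setup. Write $D_\pi(i',s',a'):=(f(\pi))^{i'}(s',a')-\pi^{i'}(s',a')$; this is exactly the quantity compared in the labelling rule. The two facts I would use are: (1) the mesh bound $\mathrm{diam}_\infty(\sigma)\le 1/d$ of the Laan--Talman subdivision, so any two points of $\sigma$ (in particular a vertex and an arbitrary $\pi\in\sigma$) are within $\ell_\infty$-distance $1/d$; and (2) since $f(\pi)$ and $\pi$ are both probability vectors on each block $(i',s')$, we have $\sum_{a'\in\bbA^{i'}}D_\pi(i',s',a')=0$.

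First I would unpack the label. Since $\sigma$ is $(i,s)$-stopping, for every action $a^i_k\in\bbA^i$ there is a vertex $\pi_k\in\sigma$ with label $(i,s,a^i_k)$; by the labelling rule this means $\pi_k^i(s,a^i_k)>0$ and $D_{\pi_k}(i,s,a^i_k)=\min_c D_{\pi_k}(c)$, the \emph{global} minimum over all coordinates $c$. The block sum-to-zero property then forces $m_k:=D_{\pi_k}(i,s,a^i_k)\le 0$. Transporting from any vertex $\pi_k$ to an arbitrary $\pi\in\sigma$ costs at most $(\lambda+1)/d$ coordinatewise, since $|D_\pi(c)-D_{\pi_k}(c)|\le \|f(\pi)-f(\pi_k)\|_\infty+\|\pi-\pi_k\|_\infty\le \lambda/d+1/d$.

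Next I would propagate bounds in four steps. (A) For each $k$, transport gives $D_\pi(i,s,a^i_k)\le m_k+(\lambda+1)/d\le(\lambda+1)/d$, an upper bound on the entire block $(i,s)$ — this is precisely where stopping enters, as it supplies such a vertex for \emph{every} action. (B) Summing block $(i,s)$ to zero and inserting (A) for the other actions gives the matching lower bound $D_\pi(i,s,a^i_k)\ge-(A^i-1)(\lambda+1)/d$. (C) For an arbitrary coordinate $c$, the global-minimum property $D_{\pi_1}(c)\ge m_1$ plus transport yields $D_\pi(c)\ge m_1-(\lambda+1)/d$, and bounding $m_1\ge D_\pi(i,s,a^i_1)-(\lambda+1)/d\ge-A^i(\lambda+1)/d$ via (B) gives the uniform lower bound $D_\pi(c)\ge-(A^i+1)(\lambda+1)/d$. (D) Applying sum-to-zero once more on each block $(i',s')$ converts this into the upper bound $D_\pi(i',s',a'_0)\le(A^{i'}-1)(A^i+1)(\lambda+1)/d$. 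Combining (C) and (D) and using $A^i,A^{i'}\le A_{\max}$ gives $\|f(\pi)-\pi\|_\infty\le(A_{\max}^2-1)(\lambda+1)/d\le A_{\max}^2(\lambda+1)/d$.

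The main obstacle is steps (C)--(D): the stopping condition directly controls only the single block $(i,s)$, where (A) gives the clean $(\lambda+1)/d$ bound, and extending control to every other block $(i',s')$ is what forces the extra factors of $A_{\max}$. This extension works only because the label picks the \emph{global} minimizer of $D_{\pi_k}$ rather than a within-block one, so the lower bound in (C) reaches coordinates outside $(i,s)$; each of the two uses of the block sum-to-zero constraint (in (B)/(C) and again in (D)) contributes one factor of about $A_{\max}$, producing the final $A_{\max}^2$. I would also state explicitly the imported mesh bound $\mathrm{diam}_\infty(\sigma)\le 1/d$, the only ingredient borrowed from the (appendix) Laan--Talman construction.
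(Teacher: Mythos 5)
Your proposal is correct and follows essentially the same route as the paper's proof: an upper bound on the block $(i,s)$ via the stopping vertices plus $\lambda$-Lipschitz transport, the matching lower bound via the block sum-to-zero identity, extension to all coordinates via the global-minimality of the labelled coordinate, and a final sum-to-zero step for the global upper bound. The only difference is an extra transport step in your step (C), giving the slightly looser intermediate constant $(A^i+1)$ where the paper gets $A_{\max}$ by applying its block lower bound directly at the labelled vertex, but this still fits under the stated $A_{\max}^2(\lambda+1)/d$.
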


\textbf{The Choice of $d$.} Let $$d=\dfrac{32A_{\max}^5R_{\max}^3(\lambda+1)}{(1-\gamma)^5}L^2.$$ It is easy to see $d$ is $\texttt{poly}(|\mathcal{SG}|,L)$. The correctness of our choice is in \Cref{correct choice of d}.

\textbf{The Edges of {\sc End of the Line} Graph.} In the algorithm of Laan and Talman~\cite{LaanT82computation}, they develop a partial one-to-one function $g: \Sigma'\rightarrow\Sigma'$ for $\Sigma'\subseteq\Sigma$ as well as a starting simplex $\sigma_0\in\Sigma$, which have the following properties:
\begin{itemize}
    \item $\sigma_0\in\Sigma'$ and there is no $\sigma'\in\Sigma'$ such that $g(\sigma')=\sigma_0$;
    \item For any $\sigma\in\Sigma'$, if $\sigma$ has no image, then $\sigma$ is a stopping simplex. For any $\sigma\in\Sigma'\setminus\{\sigma_0\}$, if $\sigma$ has no pre-image, then $\sigma$ is a stopping simplex. 
    \item the function $g$ and $g^{-1}$ could be computed in time polynomial in $|\mathcal{SG}|$ and $\log d$.
\end{itemize}

For the purpose of constructing the {\sc End of the Line} graph, we complete the function $g$ by letting $g(\sigma)=\sigma$ for any $\sigma\in\Sigma\setminus\Sigma'$. It is easy to verify our operation does not impact the properties of function $g$. So for any input instance $(|\mathcal{SG}|,L)$, we can reduce it to an instance of {\sc End of the Line}, where the two circuits $S$ and $P$ correspond to $g$ and $g^{-1}$. If we can find a solution of the {\sc End of the Line}, by \Cref{lemma: stopping appro} we know that there is an $A_{\max}^2(\lambda+1)\frac{1}{d}$-approximate fixed point in the solution simplex, thus an $1/L$-approximate MPE by \Cref{lemma: single state}, \Cref{lemma: all states}, and our choice of $d$.

\section{Conclusion}

Solving a Markov Perfect Equilibrium (MPE) in general-sum stochastic games (SG) has long expected to be at least $\textbf{PPAD}$-hard. 
In this paper, 
we prove that computing an MPE in a finite-state infinite horizon discounted SGs is $\textbf{PPAD}$-complete. 
Our proof is novel in the sense that we adopt a function with a polynomial-bound description in the strategy space that effectively helps convert the MPE computation problem to a fixed-point problem, which, otherwise, would take a representation that requires an exponential number of pure strategies with respect to the number of states and the number of agents.   
Our completeness result indicates that computing MPE in SGs is highly unlikely to be $\textbf{NP}$-hard.  
We hope our results can encourage MARL researchers to study solving MPE  in general-sum SGs,  leading to more prosperous algorithmic developments as those currently on zero-sum SGs.

\clearpage


\bibliographystyle{plain}
\bibliography{main}

\clearpage

\appendix

\addcontentsline{toc}{section}{Appendix} 
\part{{\Large{Appendix for \emph{"On the Complexity of Computing Markov Perfect Equilibrium in General-Sum Stochastic Games"}}}} 
\parttoc

\section{Detailed Proofs from \Cref{Existence of MPE}}
\label{proof of sec 4}

\subsection{Proof of \Cref{lemma: f Lipschitz}}

\fLipschitz*
\begin{proof}[Proof of \Cref{lemma: f Lipschitz}]
We first give an upper bound of $\left|r^{i,\pi_1}(s)-r^{i,\pi_2}(s)\right|$ for any $s\in \bbS$ and $i\in[n]$.
\begin{eqnarray*}
	&&\left|r^{i,\pi_1}(s)-r^{i,\pi_2}(s)\right|\\
	&=& \left|\sum_{a\in\bbA}r^i(s,a)\pi_1(s,a)-\sum_{a\in\bbA}r^i(s,a)\pi_2(s,a)\right|\\
	&=& \left|\sum_{a\in\bbA}r^i(s,a)\prod_{i\in[n]}\pi_1^i(s,a^i)-\sum_{a\in\bbA}r^i(s,a)\prod_{i\in[n]}\pi_2^i(s,a^i)\right|\\
	&\leq& \sum_{a\in\bbA}\left|r^i(s,a)\right|\left|\prod_{i\in[n]}\pi_1^i(s,a^i)-\prod_{i\in[n]}\pi_2^i(s,a^i)\right|\\
	&\leq& R_{\operatorname{max}}\sum_{a\in\bbA}\left|\prod_{i\in[n]}\pi_1^i(s,a^i)-\prod_{i\in[n]}\pi_2^i(s,a^i)\right|\\
	&\leq& nA_{\max}R_{\operatorname{max}}\delta,
\end{eqnarray*}
where the last inequality follows
\begin{eqnarray*}
	&&\sum_{a\in\bbA}\left|\prod_{i\in[n]}\pi_1^i(s,a^i)-\prod_{i\in[n]}\pi_2^i(s,a^i)\right|\\
	&=& \sum_{a\in\bbA}\left|\sum_{k=1}^n\prod_{i=1}^{k-1}\pi_1^i(s,a^i)\left(\pi_1^k(s,a^k)-\pi_2^k(s,a^k)\right)\prod_{i=k+1}^{n}\pi_2^i(s,a^i)\right|\\
	&\leq& \delta\sum_{a\in\bbA}\left|\sum_{k=1}^n\prod_{i=1}^{k-1}\pi_1^i(s,a^i)\prod_{i=k+1}^{n}\pi_2^i(s,a^i)\right|\\
	&\leq& nA_{\max}\delta.
\end{eqnarray*}

Let $V^{\pi^i,\pi^{-i}}$ denote the column vector $(V^{\pi^i,\pi^{-i}}(s))_{s\in\bbS}$, $r^{i,\pi}$ denote the column vector $(r^{i,\pi}(s))_{s\in\bbS}$, and $P^{\pi}$ denote the matrix $P^{\pi}(s,s')_{s,s'\in \bbS}$ respectively. By the Bellman equation (\Cref{definition: value function}), we have $$V^{\pi^i,\pi^{-i}}=r^{i,\pi}+\gamma P^{\pi}V^{\pi^i,\pi^{-i}},$$ which means $$V^{\pi^i,\pi^{-i}}=(I-\gamma P^{\pi})^{-1}r^{i,\pi}.$$ We will prove that $\left|(I-\gamma P^{\pi_1})^{-1}(s'|s)-(I-\gamma P^{\pi_2})^{-1}(s'|s)\right|\leq \frac{nSA_{\max}\delta}{(1-\gamma)^2}$ for any $s,s'\in\bbS$ in the following lemma (\Cref{lemma: I-gammaP inverse}).

Now we could give an upper bound of $\left|V^{\pi^i_1,\pi^{-i}_1}(s)-V^{\pi^i_2,\pi^{-i}_2}(s)\right|$ for any $s\in \bbS$.
\begin{eqnarray*}
	&&\left|V^{\pi^i_1,\pi^{-i}_1}(s)-V^{\pi^i_2,\pi^{-i}_2}(s)\right|\\
	&=& \left|\sum_{s'\in\bbS}r^{i,\pi_1}(s')(I-\gamma P^{\pi_1})^{-1}(s'|s)-\sum_{s'\in\bbS}r^{i,\pi_2}(s')(I-\gamma P^{\pi_2})^{-1}(s'|s)\right|\\
	&=& \left|\sum_{s'\in\bbS}r^{i,\pi_1}(s')\left((I-\gamma P^{\pi_1})^{-1}(s'|s)-(I-\gamma P^{\pi_2})^{-1}(s'|s)\right)+(I-\gamma P^{\pi_2})^{-1}(s'|s)\left(r^{i,\pi_1}(s')-r^{i,\pi_2}(s')\right)\right|\\
	&\leq& \sum_{s'\in\bbS}\left(R_{\max}\frac{nSA_{\max}\delta}{(1-\gamma)^2}+\frac{1}{1-\gamma}nA_{\max}R_{\max}\delta\right)\\
	&=& \frac{nSA_{\max}R_{\max}}{1-\gamma}\left(1+\frac{S}{1-\gamma}\right)\delta\\
	&\leq& \frac{2nS^2A_{\max}R_{\max}}{(1-\gamma)^2}\delta
\end{eqnarray*}
where the forth line follows from $|(I-\gamma P^{\pi_2})^{-1}(s'|s)|\leq \frac{1}{1-\gamma}$, which will also be proved in \Cref{lemma: I-gammaP inverse}.

Let $D^{\pi^i,\pi^{-i}}_{\pi^i(s,a^i)=1}(s)=\max\left(0,V^{\pi^i,\pi^{-i}}_{\pi^i(s,a^i)=1}(s)-V^{\pi^i,\pi^{-i}}(s)\right)$. Then we have the following upper bounds directly $$\left|D^{\pi^i_1,\pi^{-i}_1}_{\pi^i_1(s,a^i)=1}(s)-D^{\pi^i_2,\pi^{-i}_2}_{\pi^i_2(s,a^i)=1}(s)\right|\leq \frac{4nS^2A_{\max}R_{\max}}{(1-\gamma)^2}\delta,$$ $$\left|\sum_{b^i\in\bbA^i}\left(D^{\pi^i_1,\pi^{-i}_1}_{\pi^i_1(s,b^i)=1}(s)-D^{\pi^i_2,\pi^{-i}_2}_{\pi^i_2(s,b^i)=1}(s)\right)\right|\leq \frac{4nS^2A_{\max}^2R_{\max}}{(1-\gamma)^2}\delta.$$

Finally, we could complete our proof of this lemma. For any player $i\in[n]$, any state $s\in\bbS$ and any action $a^i\in\bbA^i$, we have 
\begin{eqnarray*}
	&&\left|(f(\pi_1))^i(s,a^i)-(f(\pi_2))^i(s,a^i)\right|\\
	&\leq& \left|\pi^i_1(s,a^i)-\pi^i_2(s,a^i)\right|+\left|D^{\pi^i_1,\pi^{-i}_1}_{\pi^i_1(s,a^i)=1}(s)-D^{\pi^i_2,\pi^{-i}_2}_{\pi^i_2(s,a^i)=1}(s)\right|+\left|\sum_{b^i\in\bbA^i}D^{\pi^i_1,\pi^{-i}_1}_{\pi^i_1(s,b^i)=1}(s)-D^{\pi^i_2,\pi^{-i}_2}_{\pi^i_2(s,b^i)=1}(s)\right|\\
	&\leq& \delta+\frac{4nS^2A_{\max}R_{\max}}{(1-\gamma)^2}\delta+\frac{4nS^2A_{\max}^2R_{\max}}{(1-\gamma)^2}\delta\\
	&\leq& \frac{9nS^2A_{\max}^2R_{\max}}{(1-\gamma)^2}\delta.
\end{eqnarray*}
\end{proof}

\subsection{Proof of \Cref{lemma: I-gammaP inverse}}

\begin{lemma}\label{lemma: I-gammaP inverse}
For every $\pi_1,\pi_2\in \prod_{i=1}^n\Delta_{A^i}^S$ such that $\left\|\pi_1-\pi_2\right\|_{\infty}\leq \delta$, we have $$\left|(I-\gamma P^{\pi_1})^{-1}(s'|s)-(I-\gamma P^{\pi_2})^{-1}(s'|s)\right|\leq \frac{nSA_{\max}\delta}{(1-\gamma)^2}$$ for any $s,s'\in\bbS$.
\end{lemma}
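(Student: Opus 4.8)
The plan is to reduce everything to two elementary facts about the resolvent $(I-\gamma P^{\pi})^{-1}$ and then apply the standard resolvent identity. First I would expand the resolvent as a Neumann series,
$$(I-\gamma P^{\pi})^{-1} = \sum_{k=0}^{\infty} \gamma^k (P^{\pi})^k,$$
which converges because $P^{\pi}$ is row-stochastic (hence has $\ell_\infty$-induced operator norm $1$) and $\gamma<1$. Since every power $(P^{\pi})^k$ is again row-stochastic, each term has nonnegative entries with row sums equal to $1$; summing the geometric series shows that $(I-\gamma P^{\pi})^{-1}$ has nonnegative entries, every row summing to $\frac{1}{1-\gamma}$. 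In particular each individual entry is at most $\frac{1}{1-\gamma}$, which is the auxiliary bound $|(I-\gamma P^{\pi_2})^{-1}(s'|s)|\le \frac{1}{1-\gamma}$ promised in the proof of \Cref{lemma: f Lipschitz}.

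Second I would control the perturbation $\Delta P := P^{\pi_1}-P^{\pi_2}$. Writing $P^{\pi}(s'|s)=\sum_{a\in\bbA}P(s'|s,a)\prod_i\pi^i(s,a^i)$ and reusing the telescoping estimate already established in \Cref{lemma: f Lipschitz} (namely $\sum_{a\in\bbA}|\prod_i\pi_1^i(s,a^i)-\prod_i\pi_2^i(s,a^i)|\le nA_{\max}\delta$), I would bound the \emph{row sums} of $\Delta P$:
$$\sum_{s'\in\bbS}\left|P^{\pi_1}(s'|s)-P^{\pi_2}(s'|s)\right|\le \sum_{a\in\bbA}\left|\textstyle\prod_i\pi_1^i(s,a^i)-\prod_i\pi_2^i(s,a^i)\right|\le nA_{\max}\delta,$$
using $\sum_{s'}P(s'|s,a)=1$. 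It is this row-sum bound, rather than a mere entrywise bound, that will keep the final estimate tight.

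Then I would invoke the resolvent identity. Setting $A=I-\gamma P^{\pi_1}$ and $B=I-\gamma P^{\pi_2}$, we have $B-A=\gamma(P^{\pi_1}-P^{\pi_2})=\gamma\Delta P$, so $A^{-1}-B^{-1}=A^{-1}(B-A)B^{-1}=\gamma A^{-1}\,\Delta P\,B^{-1}$. Expanding the $(s,s')$ entry and bounding factor by factor — the row sum $\frac{1}{1-\gamma}$ of $A^{-1}$, the row sum $nA_{\max}\delta$ of $\Delta P$, and the entrywise bound $\frac{1}{1-\gamma}$ on $B^{-1}$ — yields
$$\left|(I-\gamma P^{\pi_1})^{-1}(s'|s)-(I-\gamma P^{\pi_2})^{-1}(s'|s)\right|\le \gamma\cdot\frac{1}{1-\gamma}\cdot nA_{\max}\delta\cdot\frac{1}{1-\gamma}\le \frac{nSA_{\max}\delta}{(1-\gamma)^2},$$
the last inequality being deliberately wasteful (it gives away a factor $\gamma/S$) but sufficient for the stated claim.

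I expect the main obstacle to be purely the index bookkeeping in the triple matrix product: one must be disciplined about which of the three factors is summed entrywise versus bounded by its row sum, since using the crude entrywise bound on all three factors would introduce spurious powers of $S$ and overshoot the target. The clean route is to contract $\Delta P$ against $B^{-1}$ first (row-sum bound on $\Delta P$, entrywise bound on $B^{-1}$), and only then contract the result against $A^{-1}$ using its row sum $\frac{1}{1-\gamma}$. Everything else — Neumann convergence and the stochasticity of matrix powers — is standard.
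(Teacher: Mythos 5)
Your proof is correct, and it reaches the stated bound by a genuinely different route than the paper. The paper never forms the resolvent identity: it starts from the defining equations $Q^1=I+\gamma P^{\pi_1}Q^1$ and $Q^2=I+\gamma P^{\pi_2}Q^2$, splits $P^{\pi_1}Q^1-P^{\pi_2}Q^2$ as $P^{\pi_1}(Q^1-Q^2)+(P^{\pi_1}-P^{\pi_2})Q^2$, and thereby derives the self-bounding inequality $\left\|Q^1-Q^2\right\|_{\max}\leq \gamma\left(\left\|Q^1-Q^2\right\|_{\max}+\frac{nSA_{\max}\delta}{1-\gamma}\right)$, which it then solves for $\left\|Q^1-Q^2\right\|_{\max}$; the factor $S$ in its bound arises precisely because the perturbation is bounded entrywise, $|P^{\pi_1}(i,k)-P^{\pi_2}(i,k)|\leq nA_{\max}\delta$, and then summed over the $S$ values of $k$. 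You replace the self-bounding step with the explicit identity $A^{-1}-B^{-1}=\gamma A^{-1}\,\Delta P\,B^{-1}$ and, crucially, use the row-sum bound $\sum_{s'}|\Delta P(s'|s)|\leq nA_{\max}\delta$ (valid since $\sum_{s'}P(s'|s,a)=1$), contracting $\Delta P$ against $B^{-1}$ first and then against the row sums of $A^{-1}$; this yields $\gamma n A_{\max}\delta/(1-\gamma)^2$, which is strictly sharper (no $S$, an extra $\gamma$) and trivially implies the claimed bound. A secondary advantage is your Neumann-series derivation of the auxiliary fact $0\leq (I-\gamma P^{\pi})^{-1}(s'|s)\leq \frac{1}{1-\gamma}$: the paper obtains this via submultiplicativity, writing $\|Q^2\|_1\leq \frac{1}{1-\gamma\|P^{\pi_2}\|_1}$ with $\|P^{\pi_2}\|_1\leq 1$, which is only correct if $\|\cdot\|_1$ is read as the max-row-sum (induced $\ell_\infty$) norm rather than the induced $\ell_1$ norm (a row-stochastic matrix can have a column sum exceeding $1$); your series argument --- nonnegative terms, rows summing to $\frac{1}{1-\gamma}$ --- sidesteps this ambiguity entirely. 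Both proofs share the same game-theoretic ingredient, the telescoping estimate $\sum_{a}\left|\prod_{i}\pi_1^i(s,a^i)-\prod_{i}\pi_2^i(s,a^i)\right|\leq nA_{\max}\delta$ from the proof of \Cref{lemma: f Lipschitz}; the difference is purely in the linear algebra, where your version is tighter and arguably more transparent.
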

\begin{proof}
We first give an upper bound of $\left|P^{\pi_1}(s'|s)-P^{\pi_2}(s'|s)\right|$ for any $s,s'\in \bbS$. 
\begin{eqnarray*}
	&&\left|P^{\pi_1}(s'|s)-P^{\pi_2}(s'|s)\right|\\
	&=& \left|\sum_{a\in\bbA}P(s'|s,a)\prod_{i\in[n]}\pi_1^i(s,a^i)-\sum_{a\in\bbA}P(s'|s,a)\prod_{i\in[n]}\pi_2^i(s,a^i)\right|\\
	&\leq& \sum_{a\in\bbA}P(s'|s,a)\left|\prod_{i\in[n]}\pi_1^i(s,a^i)-\prod_{i\in[n]}\pi_2^i(s,a^i)\right|\\
	&\leq& nA_{\max}\delta
\end{eqnarray*}

Now we view $P^{\pi}$ as an $S\times S$ matrix. For any two $S\times S$ matrices $M^1, M^2$, we use $\|M^1-M^2\|_{\max}$ to denote $\max_{i,j}|M^1(i,j)-M^2(i,j)|$, i.e., the max norm. Then we have $\|P^{\pi_1}-P^{\pi_2}\|_{\max}\leq nA_{\max}\delta$.

Let $Q^1=(I-\gamma P^{\pi_1})^{-1}$ and $Q^2=(I-\gamma P^{\pi_2})^{-1}$. (Note that the inverse of $(I-\gamma P^{\pi})$ must exist because $\gamma<1$.) 

By definition, we have $Q^1=I+\gamma P^{\pi_1}Q^1$ and $Q^2=I+\gamma P^{\pi_2}Q^2$. Then
\begin{eqnarray*}
	&&\left\|Q^1-Q^2\right\|_{\max}\\
	&=& \gamma\left\|P^{\pi_1}Q^1-P^{\pi_2}Q^2\right\|_{\max}\\
	&=& \gamma \max_{i,j}\left|\sum_{k}P^{\pi_1}(i,k)Q^1(k,j)-\sum_{k}P^{\pi_2}(i,k)Q^2(k,j)\right|\\
	&\leq& \gamma \max_{i,j}\sum_{k}\left|P^{\pi_1}(i,k)Q^1(k,j)-P^{\pi_2}(i,k)Q^2(k,j)\right|\\ 
	&\leq& \gamma \max_{i,j}\left(\sum_{k}P^{\pi_1}(i,k)\left|Q^1(k,j)-Q^2(k,j)\right|+\sum_{k}|Q^2(k,j)|\left|P^{\pi_1}(i,k)-P^{\pi_2}(i,k)\right|\right)\\ 
	&\leq& \gamma \max_{i,j}\left(\max_{k}\left|Q^1(k,j)-Q^2(k,j)\right|+\sum_{k}\frac{nA_{\max}\delta}{1-\gamma}\right)\\ 
	&=& \gamma \left(\left\|Q^1-Q^2\right\|_{\max}+\frac{nSA_{\max}\delta}{1-\gamma}\right)
\end{eqnarray*}
where the sixth line follows the following facts:
\begin{enumerate}
    \item $\sum_k P^{\pi_1}(i,k)=1$.
    \item $\left|Q^1(k,j)-Q^2(k,j)\right|\leq \max_k \left|Q^1(k,j)-Q^2(k,j)\right|$.
    \item $\left|P^{\pi_1}(i,k)-P^{\pi_2}(i,k)\right|\leq nA_{\max}\delta$.
    \item $|Q^2(k,j)|\leq \|Q^2\|_1\leq \frac{1}{1-\gamma \|P^{\pi_2}\|_1}\leq \frac{1}{1-\gamma}$.
\end{enumerate}

Note that $Q^2=I+\gamma P^{\pi_2}Q^2$. Since 1-norm is submultiplicative, so we have $$\|Q^2\|_1\leq 1+\gamma \|P^{\pi_2}Q^2\|_1\leq 1+\gamma \|P^{\pi_2}\|_1\|Q^2\|_1\leq 1+\gamma\|Q^2\|_1,$$ which leads to the fourth fact.

So we have $$|Q^1-Q^2|_{\max}\leq \frac{nSA_{\max}\delta}{(1-\gamma)^2}.$$

\end{proof}

\section{Detailed Proofs from \Cref{section: membership}}
\label{proof of sec 5}
\subsection{Proof of \Cref{lemma: single state}}
\label{proof of single state}

\apprsinglestate*
\begin{proof}[Proof of \Cref{lemma: single state}]
Pick any player $i\in[n]$ and state $s\in\bbS$ in this proof. Suppose that the action space of player $i$ is $\bbA^i=\{a^i_1,\cdots,a^i_{A^i}\}$. For the simplicity of notations, for any $a^i_j\in\bbA^i$, let $$V_{a^i_j}(s):=V^{\pi^i,\pi^{-i}}_{\pi^i(s,a^i_j)=1}(s),$$ and $$D_{a^i_j}(s):=\max\left(0,V_{a^i_j}(s)-V^{\pi^i,\pi^{-i}}(s)\right).$$ Without loss of generality, assume that $$V_{a^i_1}(s)\geq V_{a^i_2}(s)\geq \cdots \geq V_{a^i_k}(s)\geq V^{\pi^i,\pi^{-i}}(s)\geq V_{a^i_{k+1}}(s)\geq \cdots \geq V_{a^i_{A^i}}(s).$$

We first give an upper bound of $D_{a^i_j}(s)$. $$D_{a^i_j}(s)=\max\left(0,V_{a^i_j}(s)-V^{\pi^i,\pi^{-i}}(s)\right)\leq V_{a^i_j}(s)\leq R_{\max}/(1-\gamma).$$

For any action $a^i_j\in\bbA^i$, by the condition $\|f(\pi)-\pi\|_{\infty}\leq \epsilon$, we know that 

\begin{eqnarray*}
	&&\pi^i(s,a^i_j)-\frac{\pi^i(s,a^i_j)+D_{a^i_j}(s)}{1+\sum_{b^i\in\bbA^i}D_{b^i(s)}}\leq \epsilon\\
	&\Longrightarrow& \pi^i(s,a^i_j)\sum_{b^i\in\bbA^i}D_{b^i}(s)\leq D_{a^i_j}(s)+ \epsilon\left(1+\sum_{b^i\in\bbA^i}D_{b^i}(s)\right)\\
	&\Longrightarrow& \pi^i(s,a^i_j)\sum_{b^i\in\bbA^i}D_{b^i}(s)\leq D_{a^i_j}(s)+ \epsilon\left(1+\frac{A_{\max}R_{\max}}{1-\gamma}\right).
\end{eqnarray*}

Setting $\epsilon'=\epsilon\left(1+\frac{A_{\max}R_{\max}}{1-\gamma}\right),$ we have the following crucial inequality: 
\begin{eqnarray}\label{eq: crucial ineq}
\pi^i(s,a^i_j)\sum_{b^i\in\bbA^i}D_{b^i}(s)\leq D_{a^i_j}(s)+ \epsilon'.
\end{eqnarray}

Let $t:=\sum_{j=k+1}^{A^i}\pi^i(s,a^i_j).$

\textbf{Case 1:} $t\geq \sqrt{\epsilon'}/R_{\max}$.

Note that for $k+1\leq j\leq A^i,D_{a^i_j}(s)=0$. By the inequality (\ref{eq: crucial ineq}), we have 
\begin{eqnarray*}
&&\sum_{j=k+1}^{A^i}\left(\pi^i(s,a^i_j)\sum_{b^i\in\bbA^i}D_{b^i}(s)\right)\leq \sum_{j=k+1}^{A^i}\left(D_{a^i_j}(s)+ \epsilon'\right)\\
&\Longrightarrow& t\sum_{b^i\in\bbA^i}D_{b^i}(s)\leq A_{\max}\epsilon'\\
&\Longrightarrow& D_{a^i_1}(s)\leq\sum_{b^i\in\bbA^i}D_{b^i}(s)\leq A_{\max}R_{\max}\sqrt{\epsilon'}.
\end{eqnarray*}

\textbf{Case 2:} $t\leq \sqrt{\epsilon'}/R_{\max}$.

By the inequality (\ref{eq: crucial ineq}), we have 
\begin{eqnarray*}
&&\pi^i(s,a^i_j)\sum_{b^i\in\bbA^i}D_{b^i}(s)\leq D_{a^i_j}(s)+ \epsilon'\\
&\Longrightarrow& \pi^i(s,a^i_j)^2\sum_{b^i\in\bbA^i}D_{b^i}(s)\leq \pi^i(s,a^i_j)\left(D_{a^i_j}(s)+ \epsilon'\right)\\
&\Longrightarrow& \sum_{j=1}^{A^i}\left(\pi^i(s,a^i_j)^2\sum_{b^i\in\bbA^i}D_{b^i}(s)\right)\leq \sum_{j=1}^{A^i} \left(\pi^i(s,a^i_j)\left(D_{a^i_j}(s)+ \epsilon'\right)\right)\\
&\Longrightarrow& \sum_{j=1}^{A^i}\pi^i(s,a^i_j)^2\sum_{b^i\in\bbA^i}D_{b^i}(s)\leq \sum_{j=1}^{k} \left(\pi^i(s,a^i_j)D_{a^i_j}(s)\right)+ \epsilon'\\
&\Longrightarrow& \sum_{j=1}^{A^i}\pi^i(s,a^i_j)^2\sum_{b^i\in\bbA^i}D_{b^i}(s)\leq \frac{R_{\max}}{1-\gamma}\sum_{j=1}^{k} \pi^i(s,a^i_j)+ \epsilon'\\
&\Longrightarrow& \sum_{j=1}^{A^i}\pi^i(s,a^i_j)^2\sum_{b^i\in\bbA^i}D_{b^i}(s)\leq \frac{R_{\max}}{1-\gamma}\frac{\sqrt{\epsilon'}}{R_{\max}}+ \epsilon'\\
&\Longrightarrow& \frac{1}{A^i}\sum_{b^i\in\bbA^i}D_{b^i}(s)\leq \frac{\sqrt{\epsilon'}}{1-\gamma}+ \epsilon'\\
&\Longrightarrow& D_{a^i_1}(s)\leq\sum_{b^i\in\bbA^i}D_{b^i}(s)\leq A_{\max}\left(\frac{\sqrt{\epsilon'}}{1-\gamma}+ \epsilon'\right).
\end{eqnarray*}

Note that the argument above could be applied to any player and any state, so for each player $i\in [n]$, each state $s\in \bbS$ and each action $a^i\in \bbA^i$, we have $$\max\left(0,V^{\pi^i,\pi^{-i}}_{\pi^i(s,a^i)=1}(s)-V^{\pi^i,\pi^{-i}}(s)\right)\leq A_{\max}\left(\frac{\sqrt{\epsilon'}}{1-\gamma}+R_{\max}\sqrt{\epsilon'}+\epsilon'\right).$$
\end{proof}

\subsection{Definition of the Set of Vertices $\Sigma$}
\label{vertices of end of the line}
Note that the strategy profile space is $\prod_{i=1}^n \Delta_{A^i}^S$, which is a production of unit simplices. We will adopt the techniques in~\cite{LaanT82computation} to triangulate the strategy profile space, where the set of vertices of {\sc End of the Line} graph will correspond to a set of simplices after triangulation.

We use $Q_{\bbA^i}$ to denote the $A^i\times A^i$ matrix 
$$
Q_{\bbA^i}=\left[
\begin{matrix}
-1 & 0 & . & . & . & 1 \\
1 & -1 & 0 & . & . & 0 \\
. &  . & . &   &   & . \\
. &    & . & . &   & . \\
. &    &   & . & . & . \\
0 &  . & . & 0 & 1 & -1
\end{matrix}
\right].
$$

For each player $i\in[n]$, we use $Q_i$ to denote the $A^i S\times A^i S$ matrix, which is a block diagonal matrix
$$
\left. Q_i=\left[
\begin{matrix}
Q_{\bbA^i} & 0          & \ldots & 0\\
0          & Q_{\bbA^i} & \ldots & 0\\
\vdots     & \vdots     & \ddots & \vdots \\
0          & 0          & \ldots & Q_{\bbA^i}
\end{matrix}
\right]
\right\}S.
$$

Finally, we use $Q$ to denote the block diagonal matrix
$$
Q=\left[
\begin{matrix}
Q_1 & 0          & \ldots & 0\\
0          & Q_2 & \ldots & 0\\
\vdots     & \vdots & \ddots & \vdots \\
0          & 0          & \ldots & Q_n
\end{matrix}
\right].
$$

For any agent $i\in[n]$, state $s\in\bbS$ and action $a^i\in\bbA^i$, we use $Q(i,s,a^i)$ to denote the corresponding column. Let $v^0$ be an arbitrary (starting) point in $\prod_{i=1}^n \Delta_{A^i}^S(d)$.

For each agent $i\in[n]$ and $s\in\bbS$, let $I_{i,s}=\{(i,s,a^i)|a^i\in\bbA^i\}$. Let $\mathcal{I}$ be a collection of all subsets $T$ of $\bigcup_{i\in[n],s\in\bbS}I_{i,s}$ such that for each $i$ and $s$ there is at least one element $(i,s,a^i)$ not in $T$.

For all $T\in\mathcal{I}$, we define $A(T)$, which is a subset of $\prod_{i=1}^n \Delta_{A^i}^S$, as follows.
$$A(T)=\left\{x\in\prod_{i=1}^n \Delta_{A^i}^S|x=v^0+\sum_{(i,s,a^i)\in T}\lambda(i,s,a^i)Q(i,s,a^i)\texttt{ for }\lambda(i,s,a^i)\geq 0 \right\}.$$

Let us fix some $T\in\mathcal{T}$, $\phi: [|T|]\rightarrow T$ be a permutation of $T$, and $w^0\in A(T)\cap \prod_{i=1}^n \Delta_{A^i}^S(d)$. We use $\Delta(w^0,\phi)$ to denote the convex hull of $|T|+1$ vertices $\{w^0,w^1,\cdots,w^{|T|}\}$ (which is a simplex), where $$w^i=w^{i-1}+Q(\phi(i)),\quad i\in[|T|].$$

Define $$\Sigma_T=\{\Delta(w^0,\phi)|\Delta(w^0,\phi)\in A(T)\cap \prod_{i=1}^n \Delta_{A^i}^S(d),\phi \texttt{ is a permutation of }T\}.$$

Then we have the following lemma.
\begin{lemma}[\cite{LaanT82computation}]
For each $T\in\mathcal{I}$, $\Sigma_{T}$ triangulates $A(T)$.
\end{lemma}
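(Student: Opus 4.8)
The plan is to pass from the ambient strategy space to the coefficient (``$\lambda$'') coordinates and to recognize $\Sigma_T$ as the image, under an affine isomorphism, of the classical Freudenthal--Kuhn triangulation. Concretely, I would introduce the affine map
\[
\Phi_T(\lambda)=v^0+\sum_{(i,s,a^i)\in T}\lambda(i,s,a^i)\,Q(i,s,a^i),\qquad \lambda\in\bbR^{T},
\]
so that $A(T)=\Phi_T\big(\{\lambda\ge 0\}\big)\cap\prod_{i=1}^n\Delta_{A^i}^S$. To prove ``$\Sigma_T$ triangulates $A(T)$'' I must check the three defining properties of a triangulation: each $\Delta(w^0,\phi)\in\Sigma_T$ is a full-dimensional simplex of $A(T)$ (dimension $|T|$), the simplices cover $A(T)\cap\prod_{i=1}^n\Delta_{A^i}^S(d)$, and any two of them meet in a common face. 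The crux on which all three rest is that $\Phi_T$ is injective, i.e.\ that the columns $\{Q(i,s,a^i):(i,s,a^i)\in T\}$ are linearly independent; once this holds, $\Phi_T$ is an affine isomorphism onto the $|T|$-dimensional polytope $A(T)$, and triangulation properties transport across it.

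First I would establish the independence. Because $Q$ is block diagonal, columns coming from distinct blocks $(i,s)$ have disjoint supports, so it suffices to argue within a single block. The matrix $Q_{\bbA^i}$ is (up to sign) the vertex--edge incidence matrix of a directed cycle on $A^i$ nodes: each column has a single $+1$ and a single $-1$, and the unique linear dependence among all $A^i$ columns is that they sum to zero, so $Q_{\bbA^i}$ has rank $A^i-1$. Interpreting columns as edges of the cycle, any \emph{proper} subset of columns spans a forest and is therefore linearly independent. The defining property of $\mathcal{I}$ is exactly that, for each $(i,s)$, at least one index $(i,s,a^i)$ is omitted from $T$; hence within every block $T$ selects a proper subset of the columns, which is independent, and combining across the disjoint blocks gives global independence. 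This simultaneously shows $\dim A(T)=|T|$ and that the consecutive vertex differences $w^{p}-w^{p-1}=Q(\phi(p))$ of $\Delta(w^0,\phi)$ are independent, so each $\Delta(w^0,\phi)$ is a genuine $|T|$-simplex.

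With $\Phi_T$ an affine isomorphism, I would then pull the construction back: the vertices $w^0,w^1,\dots,w^{|T|}$ of $\Delta(w^0,\phi)$ map under $\Phi_T^{-1}$ to $\lambda^0,\ \lambda^0+e_{\phi(1)},\ \lambda^0+e_{\phi(1)}+e_{\phi(2)},\dots$, which are precisely the vertices of the standard Kuhn simplex indexed by the integer base point $\lambda^0$ and the permutation $\phi$. It is a classical fact (this is the Laan--Talman/Freudenthal triangulation invoked in the citation) that, as $\lambda^0$ ranges over the lattice and $\phi$ over all permutations, these Kuhn simplices cover $\bbR^{|T|}$ and meet face to face. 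Restricting to $\lambda\ge 0$ and to the region cut out by the simplex constraints, and pushing forward through $\Phi_T$, yields that the members of $\Sigma_T$ cover $A(T)$ and intersect in common faces. The main obstacle I anticipate is not the Freudenthal property itself (which is standard) but the boundary bookkeeping: one must verify that the polytope $A(T)$ is a union of full grid cells, i.e.\ that the facets introduced by the constraints $\pi^i(s,a^i)\ge 0$ and $\sum_{a^i}\pi^i(s,a^i)=1$ lie along grid hyperplanes of $\prod_{i=1}^n\Delta_{A^i}^S(d)$, so that only whole Kuhn simplices are retained (consistent with the rule ``$\Delta(w^0,\phi)\in\Sigma_T$ only if $\Delta(w^0,\phi)\subseteq A(T)\cap\prod_{i=1}^n\Delta_{A^i}^S(d)$''). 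Handling this compatibility, together with the independence step above, is where the real work lies; the remainder is a transparent transport of the standard triangulation along $\Phi_T$.
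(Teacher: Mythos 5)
The paper itself contains no proof of this lemma: it is imported verbatim from Laan and Talman \cite{LaanT82computation}, so there is no in-paper argument to compare against, and your proposal is in effect a reconstruction of the cited result. Your reconstruction is correct in its two substantive steps. The independence argument is right: columns of $Q$ belonging to distinct blocks $(i,s)$ have disjoint supports; within a block, $Q_{\bbA^i}$ is (up to sign) the incidence matrix of a directed cycle on $A^i$ nodes, and the defining property of $\mathcal{I}$ is precisely that $T$ omits at least one column from every block, so each block of $T$ indexes a forest and is independent. This gives $\dim A(T)=|T|$, makes $\Phi_T$ an affine isomorphism, and shows each $\Delta(w^0,\phi)$ is a genuine $|T|$-simplex. (One typographical point: the paper writes $w^{p}=w^{p-1}+Q(\phi(p))$, but the columns must be scaled by $1/d$ for the simplices to lie on the grid $\prod_{i=1}^n\Delta_{A^i}^S(d)$; your lattice reading is the intended one, and it is consistent with the bound $\|\pi-\pi'\|_\infty\leq 1/d$ used later in the proof of the stopping lemma.)

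The one step you flag but leave open, the boundary bookkeeping, does close, and by exactly the mechanism you anticipate; for completeness here is the observation that finishes it. After rescaling by $d$, every constraint cutting $\Phi_T^{-1}(A(T))$ out of $\bbR^{T}$ has one of two forms: $\lambda_e\geq 0$ for $e\in T$, or, for the coordinate $(i,s,a^i_j)$ of the ambient polytope, the nonnegativity constraint $\pi^i(s,a^i_j)\geq 0$ becomes $\lambda_{(i,s,a^i_{j-1})}-\lambda_{(i,s,a^i_j)}\geq -d\,v^0_{(i,s,a^i_j)}$ (indices cyclic; a term is dropped when the corresponding column is not in $T$), while the sum-to-one constraints hold identically because each column of $Q$ sums to zero within every block. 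Since $v^0$ is a grid point, all right-hand sides are integers, i.e.\ every facet of $\Phi_T^{-1}(A(T))$ lies on a hyperplane $\{x_e=k\}$ or $\{x_e-x_{e'}=k\}$ with $k\in\mathbb{Z}$. On the interior of a Kuhn simplex the fractional parts of the coordinates are strictly between $0$ and $1$ and strictly ordered, so such hyperplanes never meet any interior; hence $\Phi_T^{-1}(A(T))$ is a union of whole Kuhn simplices, and the covering and face-to-face properties transport through $\Phi_T$ to give that $\Sigma_T$ triangulates $A(T)$. With that paragraph added, your plan is a complete and correct proof of the statement the paper only cites.
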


\textbf{The Vertices of {\sc End of the Line} Graph} are $\Sigma:=\bigcup_{T\in\mathcal{I}}\Sigma_T$.

\subsection{Proof of \Cref{lemma: stopping appro}}
\label{proof of stopping appro}

\approstopping*
\begin{proof}[Proof of \Cref{lemma: stopping appro}]
Because of the triangulation, we know that for any simplex $\delta\in\Sigma$ and two strategy profiles $\pi,\pi'\in\delta$, $\|\pi-\pi'\|_{\infty}\leq 1/d$.

Now let the simplex $\delta\in\Sigma$ be $(i,s)$-stopping. By the definition, we know for any $a^i\in\bbA^{i}$, there is a strategy profile, denoted by $\pi_{a^i}\in \prod_{i=1}^n\Delta_{A^i}^S$, whose label is $(i,s,a^i)$. Then $$\left(f(\pi_{a^i})\right)^{i}(s,a^i)-\pi_{a^i}^{i}(s,a^i)\leq 0 \quad \forall a^i\in\bbA^i.$$

Then for any $\pi\in\delta$, $\forall a^i\in\bbA^i$, we have $\pi_{a^i}^i(s,a^i)-\pi^i(s,a^i)\leq \frac{1}{d}$ and $f$ is $\lambda$-Lipschitz, which means $$\left(f(\pi)\right)^i(s,a^i)-\pi^i(s,a^i)\leq \left(f(\pi_{a^i})\right)^i(s,a^i)-\pi_{a^i}^i(s,a^i)+(\lambda+1)\frac{1}{d}\leq (\lambda+1)\frac{1}{d}.$$

Using $\sum_{b^i\in\bbA^i}\pi^i(s,b^i)=\sum_{b^i\in\bbA^i}\left(f(\pi^i)\right)(s,b^i)=1$, we have
\begin{eqnarray*}
	&& \left(f(\pi)\right)^i(s,a^i)-\pi^i(s,a^i)\\
	&=& \sum_{b^i\in\bbA^i,b^i\neq a^i}\pi^i(s,b^i)-\sum_{b^i\in\bbA^i,b^i\neq a^i}\left(f(\pi^i)\right)^i(s,b^i)\\
	&\geq& -(A_{\max}-1)(\lambda+1)\frac{1}{d}.
\end{eqnarray*}

Pick $a^i\in \bbA^i$ arbitrarily. Combine with the definition of labelling rule, for any $\pi\in\delta$, $j\in[n]$, $v\in\bbS$ and $b^j\in\bbA^j$, we have
\begin{eqnarray*}
	&& \left(f(\pi)\right)^j(v,b^j)-\pi^j(v,b^j)\\
	&\geq& \left(f(\pi_{a^i})\right)^j(v,b^j)-\pi_{a^i}^j(v,b^j)-(\lambda+1)\frac{1}{d}\\
	&\geq& \left(f(\pi_{a^i})\right)^i(s,a^i)-\pi_{a^i}^i(s,a^i)-(\lambda+1)\frac{1}{d}\\
	&\geq& -A_{\max}(\lambda+1)\frac{1}{d},
\end{eqnarray*}
which finishes the lower bound of this lemma.

Also, by the similar argument $\sum_{b^j\in\bbA^j}\pi^j(v,b^j)=\sum_{b^j\in\bbA^j}\left(f(\pi^j)\right)^j(v,b^j)=1$, we know that
\begin{eqnarray*}
	&& \left(f(\pi)\right)^j(v,b^j)-\pi^j(v,b^j)\\
	&\leq& A_{\max}(A_{\max}-1)(\lambda+1)\frac{1}{d}\\
	&\leq& A_{\max}^2(\lambda+1)\frac{1}{d},
\end{eqnarray*}
which finished the upper bound of this lemma.
\end{proof}

\subsection{Correctness of Our Choice of $d$}
\label{correct choice of d}

By \Cref{lemma: stopping appro}, we will find a $\pi$ such that $$\|f(\pi)-\pi\|_{\infty}\leq \dfrac{(1-\gamma)^5}{32A_{\max}^3R_{\max}^3}\frac{1}{L^2}.$$
In \Cref{lemma: single state}, we will have $$\epsilon'=\dfrac{(1-\gamma)^5}{32A_{\max}^3R_{\max}^3}\frac{1}{L^2}\left(1+\frac{A_{\max}R_{\max}}{1-\gamma}\right)\leq \dfrac{(1-\gamma)^5}{32A_{\max}^3R_{\max}^3}\frac{1}{L^2}\frac{2A_{\max}R_{\max}}{1-\gamma}\leq \dfrac{(1-\gamma)^4}{16A_{\max}^2R_{\max}^2}\frac{1}{L^2},$$
which means 
\begin{eqnarray*}
	&&\max\left(0,V^{\pi^i,\pi^{-i}}_{\pi^i(s,a^i)=1}(s)-V^{\pi^i,\pi^{-i}}(s)\right)\\
	&\leq& A_{\max}\left(\frac{\sqrt{\epsilon'}}{1-\gamma}+R_{\max}\sqrt{\epsilon'}+\epsilon'\right)\\
	&\leq& A_{\max}\left(\frac{2R_{\max}}{1-\gamma}\sqrt{\epsilon'}+\epsilon'\right)\\
	&\leq& A_{\max}\left(\frac{2R_{\max}}{1-\gamma}\dfrac{(1-\gamma)^2}{4A_{\max}R_{\max}}\frac{1}{L}+\dfrac{(1-\gamma)^4}{16A_{\max}^2R_{\max}^2}\frac{1}{L^2}\right)\\
	&\leq& \dfrac{(1-\gamma)}{2}\frac{1}{L}+\dfrac{(1-\gamma)^4}{16A_{\max}R_{\max}^2}\frac{1}{L^2}.
\end{eqnarray*}
By \Cref{lemma: all states}, we know $\pi$ is an $\frac{1}{2L}+\frac{(1-\gamma)^3}{16A_{\max}R_{\max}^2}\frac{1}{L^2}$-approximate MPE, so $\pi$ is a $1/L$-approximate MPE.

\end{document}